\newtheorem{whatever}{Whatever}
\newtheorem{lemma}[whatever]{Lemma}
\newtheorem{theorem}[whatever]{Theorem}
\newtheorem{proposition}[whatever]{Proposition}
\newtheorem{definition}[whatever]{Definition}
\newcommand{\comment}[1]{}
\newcommand{\BAS}{\begin{align*}}
\newcommand{\EAS}{\end{align*}}
\newcommand{\BA}{\begin{align}}
\newcommand{\EA}{\end{align}}
\newcommand{\ph}{P(H)}
\newcommand{\pl}{P(L)}
\newcommand{\al}{\alpha}
\newcommand{\ie}{{\it i.e.}}
\newcommand{\eg}{{\it e.g.}}
\newcommand{\ones}{{\mathbf 1}}
\newcommand{\zeros}{{\mathbf 0}}
\newcommand{\BEQ}{\begin{equation}}
\newcommand{\EEQ}{\end{equation}}
\newcommand{\BIT}{\begin{itemize}}
\newcommand{\EIT}{\end{itemize}}
\newcommand{\BNUM}{\begin{enumerate}}
\newcommand{\ENUM}{\end{enumerate}}
\def\ph{\mathcal{P}[H]}
\def\pl{\mathcal{P}[L]}
\def\sph{p[H]}
\def\spl{p[L]}
\def\sqh{q[H]}
\def\sqjh{q_j[H]}
\def\pib{\bar{p}_{i}}
\def\sphib{\bar{p}_{i}[H]}
\def\sphipj{p_{\ipj}[H]}
\def\M{\mathcal{M}}
\def\eps{\epsilon}
\def\truth{X}
\def\invert{X^c}
\def\ipj{r_{j}(i)}
\def\ipjl{r_{j_l}(i)}
\title{Crowdsourced Judgement Elicitation with Endogenous Proficiency}
\author{
Anirban Dasgupta\thanks{Yahoo! Labs, Sunnyvale, CA 95054. Email: \texttt{anirban@yahoo-inc.com}}\and
Arpita Ghosh\thanks{Cornell University, Ithaca, NY 14853. Email: \texttt{arpitaghosh@cornell.edu}}
}
\date{}
\begin{document}

\maketitle

\begin{abstract}
Crowdsourcing is now widely used to replace judgement or evaluation by an expert authority with an aggregate evaluation from a number of non-experts, in applications ranging from rating and categorizing online content all the way to evaluation of student assignments in massively open online courses (MOOCs) via peer grading. A key issue in these settings, where direct monitoring of both effort and accuracy is infeasible, is incentivizing agents in the `crowd' to put in {\em effort} to make good evaluations, as well as to truthfully report their evaluations. We study the design of mechanisms for crowdsourced judgement elicitation when workers strategically choose {\em both} their reports and the effort they put into their evaluations. This leads to a new family of information elicitation problems with unobservable ground truth, where an agent's proficiency--- the probability with which she correctly evaluates the underlying ground truth--- is {\em endogenously} determined by her strategic choice of how much effort to put into the task.

Our main contribution is a simple, new, mechanism for binary information elicitation for multiple tasks when agents have {\em endogenous proficiencies}, with the following properties: (i) Exerting maximum effort followed by truthful reporting of observations is a Nash equilibrium. (ii) This is the equilibrium with {\em maximum payoff} to all agents, {\em even} when agents have different maximum proficiencies, can use mixed strategies, and can choose a different strategy for each of their tasks. Our information elicitation mechanism requires only minimal bounds on the priors, asks agents to only report their own evaluations, and does not require any conditions on a diverging number of agent reports per task to achieve its incentive properties. The main idea behind our mechanism is to use the presence of {\em multiple} tasks and ratings to estimate a reporting statistic to identify and penalize low-effort agreement--- the mechanism rewards agents for agreeing with another `reference' agent report on the same task but also penalizes for {\em blind agreement} by subtracting out this statistic term, designed so that agents obtain rewards {\em only} when they put in effort into their observations.
\end{abstract}

\section{Introduction}
\label{s-intro}
Crowdsourcing, where a problem or task is broadcast to a crowd of potential participants for solution, is used for an increasingly wide variety of tasks on the Web. One particularly common application of crowdsourcing is in the context of making evaluations, or judgements--- when the number of evaluations required is too large for a single expert, a solution is to replace the expert by an evaluation aggregated from a `crowd' recruited on an online crowdsourcing platform such as Amazon Mechanical Turk. Crowdsourced judgement elicitation is now used for a number of applications such as image classification and labeling, judging the quality of online content, identifying abusive or adult content, and most recently for peer grading in online education, where Massively Open Online Courses (MOOCs) with enrollment in the hundreds of thousands crowdsource the problem of evaluating assignments submitted by students back to the class itself. While one issue in the context of crowdsourcing evaluations is how best to aggregate the evaluations obtained from the crowd, there is also a key question of {\em eliciting the best possible evaluations} from the crowd in the first place.

The problem of designing incentive mechanisms for such crowdsourced judgement elicitation scenarios has two aspects. First, suppose each worker has already evaluated, or formed a judgement on, the tasks allocated to her. Since the `ground truth' for each task is unknown to the system, a natural solution is to reward workers based on other workers' reports for the same task (this being the only available source of information about this ground truth)\footnote{It is of course infeasible for a requester to monitor every worker's performance on her task, since this would be a problem of the same scale as simply performing all the tasks herself. We also note that a naive approach of randomly checking some subset of evaluations, either via inserting tasks with known responses, or via random checking by the requester, turns out to be very wasteful of effort at the scale neccessary to achieve the right incentives.}. The problem of designing rewards to incentivize agents to truthfully report their observation, rather than, for example, a report that is more likely to agree with other agents' reports, is an {\em information elicitation} problem with unobservable ground truth. Information elicitation has been recently been addressed in the literature in the context of eliciting opinions online (such as user opinions about products, or experiences with service providers); see \S \ref{s-relwork}. However, in those settings, agents (users) have {\em already} formed an opinion after receiving their signal (for example a user who buys a product forms an opinion about it after buying it)--- so agents only need to be incentivized to incur the cost to report this opinion and find it more profitable to report their opinions {\em truthfully} than to report a different opinion.

In the crowdsourcing settings we consider, however, the user does not have such a pre-formed, or experiential, opinion anyway, but rather forms a judgement {\em as part of her task}--- further, the accuracy of this judgement depends on whether or not the agent puts in effort into it (for instance, a worker evaluating whether images contain objectionable content could put in no effort and declare all images to be clean, or put in effort into identifying which images are actually appropriate; a similar choice applies in other contexts like peer-grading). A key issue in these crowdsourced judgement elicitation scenarios is therefore {\em incentivizing effort}\footnote{We thank David Evans (VP Education, Udacity) for pointing out this issue in the context of peer-grading applications--- while students might put in their best efforts on grading screening assignments to ensure they demonstrate the minimum proficiency required to be allowed to grade, how can we be sure that they will continue to work with the same proficiency when grading homeworks outside of this screening set?}--- that is, ensuring that agents {\em make} the best judgements that they possibly can (in addition, of course, to ensuring that they then  truthfully report this observation). This leads to a new kind of information elicitation problem where an agent's proficiency now {\em depends} on her effort choice, and so is {\em endogenous} and unknown to the system--- even if an agent's maximum proficiency is known, the actual proficiency with which she performs a task is an endogenous, strategic choice and therefore cannot be assumed as fixed or given.
	
A mechanism for information elicitation in this setting should make it `most beneficial', if not the only beneficial strategy, for agents to not just {\em report} their observations truthfully, but to also {\em make} the best observations they can in the first place. Also, it is even more important now to ensure that the payoffs from all agents always blindly reporting the same observation (for instance, declaring all content to be good) are strictly smaller than the payoffs from truthfully reporting what was actually observed, since declaring all tasks to be of some predecided type requires no effort and therefore incurs no cost, whereas actually putting in effort into making observations will incur a nonzero cost. Finally, unlike mechanisms designed for settings where a large audience is being polled for its opinion about a single event, a mechanism here must retain its incentive properties even when there are only a few reports per task--- this is because it can be infeasible, due to either monetary or effort constraints, to solicit reports from a large number of agents for each task. (For example, the number of tasks in peer grading scales {\em linearly} with the number of agents, limiting the number of reports available for each task since each student can only grade a few assignments; similarly, the total cost to the requester in crowdsourcing platforms such as Amazon Mechanical Turk scales linearly with the number of workers reporting on each task). How can we elicit the best possible evaluations from agents whose proficiency of evaluation depends on their strategically chosen effort, when the ground truth as well as the effort levels of agents are unobservable to the mechanism?  \\

\noindent {\bf Our Contributions.} We introduce a model for information elicitation with {\em endogenous proficiency}, where an agent's strategic choice of whether or not to put in effort into a task endogenously determines her proficiency (the probability of correctly evaluating the ground truth) for that task. We focus on the design of mechanisms for binary information elicitation, \ie, when the underlying ground truth is binary (corresponding to eliciting `good' or `bad' ratings). While generalizing to an arbitrary underlying type space is an immediate direction for further work, we note that a number of interesting judgement and evaluation tasks, for example identifying adult content or correctness evaluation, are indeed binary; also, even very recent literature providing improved mechanisms for information elicitation (e.g.~\cite{witkowski2012robust,witkowski_ec12}), as well as experimental work on the performance of elicitation mechanisms~\cite{prelec2007algorithm,john2012measuring}, focuses on models with binary ground truth.

Our main contribution is a simple, new, mechanism for binary information elicitation for multiple tasks when agents have endogenous proficiencies. Our mechanism has the following incentive properties.
\begin{enumerate}
\item[(i)] Exerting maximum effort followed by truthful reporting of observations is a Nash equilibrium.
\item[(ii)] This is the equilibrium with {\em maximum payoff} to all agents, {\em even} when agents have different maximum proficiencies, can use {\em mixed} strategies, and can choose a different strategy for each of their tasks.

Showing that full-effort truthtelling leads to the maximum reward amongst all equilibria (including those involving mixed strategies) requires arguing about the rewards to agents in all possible equilibria that may arise. To do this, we use a matrix representation of strategies where every strategy can be written as a convex combination of `basis' strategies, so that maximizing a function over the set of all possible strategies is equivalent to a maximization over the space of coefficients in this convex combination. This representation lets us show that the reward to an agent over all possible strategy choices (by herself and other agents), and therefore over all equilibria, is maximized when all agents use the strategy of full-effort truthful reporting.

\item[(iii)] Suppose there is some positive probability, however small, that there is some `trusted' agent for each task who will report on that task truthfully with proficiency greater than half. Then the equilibrium where all agents put in full effort and report truthfully on all their tasks is essentially the only equilibrium of our mechanism, {\em even if} the mechanism does not know the identity of the trusted agents.
\end{enumerate}

We note that our mechanism requires only minimal bounds on the priors and imposes no conditions on a diverging number of agent reports per task  to achieve its incentive properties--- to the best of our knowledge, previous mechanisms for information elicitation do not provide all these guarantees simultaneously, even when proficiency is not an endogenously determined choice (see \S\ref{s-relwork} for a discussion).

The main idea behind our mechanism $\M$ is the following. With just one task, it is difficult to distinguish between agreement arising from high-effort observations of the same ground truth, and `blind' agreement achieved by the low-effort strategy of always making the same report. We use the presence of {\em multiple} tasks and ratings to distinguish between these two scenarios and appropriately reward or penalize agents to incentivize high effort--- 
$\M$ rewards an agent $i$ for her report on task $j$ for agreeing with another `reference' agent $\ipj$'s report on the same task, but also {\em penalizes for blind agreement} by subtracting out a statistic term corresponding to the part of $i$ and $\ipj $'s agreement on task $j$ that is to be `expected anyway' given their reporting statistics estimated from other tasks. This statistic term is chosen so that there is no benefit to making reports that are independent of the ground truth; the incentive properties of the mechanism follow from this property that agents obtain positive rewards {\em only} when they put effort into their evaluations.

\subsection{Related Work}
\label{s-relwork}
The problem of designing incentives for crowdsourced judgement elicitation is closely related to the growing literature on information elicitation mechanisms. The key difference between this literature (discussed in greater detail below) and our work is that agents in the settings motivating past work have opinions that are {\em experientially formed} anyway--- independent, and outside of, any mechanisms to elicit opinions--- so that agents only need be incentivized to participate and truthfully report these opinions. In contrast, agents in the crowdsourcing settings we study {\em do not have such experientially formed opinions} to report--- an agent makes a judgement only because it is part of her task, expending effort to {\em form} her judgement, and therefore must be incentivized to both expend this effort and then to truthfully report her evaluation. There are also other differences in terms of the models and guarantees in previous mechanisms for information elicitation; we discuss this literature below.

The peer-prediction method, introduced by Miller, Resnick and Zeckhauser~\cite{miller2005eliciting}, is a mechanism for the information elicitation problem for general outcome spaces where truthful reporting is a Nash equilibrium, uses proper scoring rules to reward agents for reports that are predictive of other agents' reports. The main difference between our mechanism and~\cite{miller2005eliciting}, as well as other mechanisms based on the peer prediction method~\cite{jurcafaltings06,jurcafaltings07,witkowski_sc11,jurcafaltings09,witkowski_ec12}, is in the model of agent proficiency. In peer-prediction models, while agents can decide whether to incur the cost to {\em participate} (\ie, submit their opinion), an agent's proficiency--- the distribution of opinions or evaluations conditional on ground truth--- is {\em exogenously} determined (and common to all agents, and in most models, known to the center). 
That is, an agent might not participate at all, but if she does participate she is assumed to have some known proficiency. In contrast, in our setting, agents can choose not just whether or not to participate, but also {\em endogenously} determine their proficiency conditional on participating through their effort choice. Thus, while peer prediction mechanisms do need to incentivize agents to participate (by submitting a report), they then know the proficiency of agents who do submit a report, and therefore can, and do, dispense rewards that crucially use knowledge of this proficiency. In contrast, even agents who do submit reports in our setting cannot be assumed to be using their maximum proficiency to make their evaluations, and therefore cannot be rewarded based on any assumed level of proficiency.
Additionally, truthtelling, while an equilibrium, is {\em not necessarily the maximum-reward equilibrium} in these existing peer-prediction mechanisms--- \cite{jurca2005enforce} shows that for the mechanisms in~\cite{miller2005eliciting,jurcafaltings06}, the strategies of always reporting `good' or always reporting `bad' both constitute Nash equilibria, at least one of which generates {\em higher} payoff than truthtelling. Such blind strategy equilibria can be eliminated and honest reporting made the unique Nash equilibrium by  designing the payments as in~\cite{jurcafaltings09}, but this needs agents to be restricted to pure reporting strategies, and requires full knowledge of the prior and conditional probability distributions to compute the rewards.


The Bayesian Truth Serum (BTS)~\cite{prelec2004bayesian} is another mechanism for information elicitation with unobservable ground truth. BTS does not use the knowledge of a common prior to compute rewards, but rather collects two reports from each agent--- an `information' report which is the agent's own observation, as well as a `prediction' report which is the agent's prediction about the distribution of information reports from the population--- and uses these to compute rewards such that truthful reporting is the highest-reward Nash equilibrium of the BTS mechanism. In addition to the key difference of exogenous versus endogenous proficiencies discussed above, an important limitation of BTS in the crowdsourcing setting is that it requires the number of agents $n$ reporting on a task to diverge to ensure its incentive properties. This $n \rightarrow \infty$ requirement is infeasible in our setting due to the scaling of cost with number of reports as discussed in the introduction. ~\cite{witkowski2012robust} provides a robust BTS mechanism (RBTS) that works even for small populations (again in the same non-endogenous proficiency model as BTS and peer prediction mechanisms), and also ensures payments are positive, making the mechanism ex-post individually rational in contrast to BTS. However, the RBTS mechanism does not retain the property of truthtelling being the highest reward Nash equilibrium--- indeed, the `blind agreement' equilibrium via constant reports achieves the maximum possible reward in RBTS, whereas truthtelling might in fact lead to lower rewards.

There is also work on information elicitation in conducting surveys and online polling~\cite{lambert2008truthful,jurcafb08}, both of which are not quite appropriate for our crowdsourcing setting. The mechanism in~\cite{lambert2008truthful} is weakly incentive compatible (agents are indifferent between lying and truthtelling), while \cite{jurcafb08} presents a online mechanism that is not incentive compatible in the sense that we use and potentially requires a large (constant) number of agents to converge to the true result. For other work on information elicitation, albeit in settings very different from ours, see~\cite{goelrp09,chen2012predicting,papakonstantinou2011mechanism}.

We note also that we model settings where there is indeed a notion of a ground truth, albeit unobservable, so that proficient agents who put in effort are more likely than not to correctly observe this ground truth. Peer-prediction methods, as well as the Bayesian truth serum, are designed for settings where there may be no underlying ground truth at all, and the mechanism only seeks to elicit agents' true observations (whatever they are) which means that some agents might well be in the minority even when they truthfully report their observation--- this makes the peer prediction setting `harder' along the dimension of inducing truthful reports, but easier along the dimension of not needing to incentivize agents to choose to exert effort to make high-proficiency observations.


We note that our problem can also be cast as a version of a principal-agent problem with a very large number of agents, although the principal cannot directly observe an agent's `output' as in standard models. While there is a vast literature in economics on the principal-agent problem too large to describe here (see, eg, ~\cite{bolton2005contract} and references therein), none of this literature, to the best of our knowledge, addresses our problem. Finally, there is also a large orthogonal body of work on the problem of {\em learning} unknown (but {\em exogenous}) agent proficiencies, as well as on the problem of optimally combining reports from agents with differing proficiencies to come up with the best aggregate evaluation in various models and settings. These problems of learning exogenous agent proficiencies and optimally aggregating agent reports are orthogonal to our problem of providing incentives to agents with endogenous, effort-dependent proficiencies to elicit the best possible evaluations from them. 

\section{Model}
\label{s-model}
We now present a simple abstraction of the problem of designing mechanisms for crowdsourced judgement elicitation settings where agents' proficiencies are determined by strategic effort choice.

{\em Tasks.} There are $m$ tasks, or objects, $j = 1, \ldots, m$, where each task has some underlying `true quality', or type, $\bar X_j$. This true type $\bar X_j$ is unknown to the system. We assume that the types are binary-valued: $\bar X_j$ is either $H$ (or $1$, corresponding to high-quality) or $L$ (or $0$, for low quality) for all $j$; we use $1$ and $H$ (resp. $0$ and $L$) interchangeably throughout for convenience. The prior probabilities of $H$ and $L$ for all tasks are denoted by $\ph$ and $\pl$. We assume throughout that $\max(\ph,\pl) < 1$, \ie, that there is at least some uncertainty in the underlying qualities of the objects.

{\em Agents.} There are $n$ workers or agents $i = 1, \ldots, n$ who noisily evaluate, or form judgements on, the qualities of objects. We say agent $i$ performs task $j$ if $i$ evaluates object $j$. Agent $i$'s judgement on task $j$ is denoted by $\hat X_{ij} \in \{0,1\}$, where $\hat X_{ij}$ is $0$ if $i$ evaluates $j$ to be of type $L$ and $\hat X_{ij}$ is $1$ if $i$ evaluates it to be $H$. Having made an evaluation $\hat X_{ij}$, an agent can choose to report any value $X_{ij} \in \{0,1\}$ either based on, or independent of, her actual evaluation $\hat X_{ij}$.

We denote the set of tasks performed by an agent $i$ by  $J(i)$, and let $I(j)$ denote the set of agents who perform task $j$. We will assume for notational simplicity that $|J(i)| = D$ and $|I(j)| = T$ for all agents $i$ and tasks $j$. 

{\em Proficiency.} An agent's {\em proficiency} at a task is the probability with which she correctly evaluates its true type or quality. We assume that an agent's proficiency is an increasing function of the {\em effort} she puts into making her evaluation. Let $e_{ij}$ denote agent $i$'s effort level for task $j$: we assume for simplicity that effort is binary-valued, $e_{ij} \in \{0,1\}$. Putting in $0$ effort has cost $c_{ij}(0) = 0$, whereas putting in full effort has cost $c_{ij}(1) \geq 0$ (we note that our results also extend to a linear model with continuous effort where $e_{ij} \in [0,1]$ and the probability $p_i(e_{ij})$ of correctly observing $\bar X_j$ as well as the cost $c_i(e_{ij})$ increase linearly with $e_{ij}$).

An agent who puts in zero effort makes evaluations with proficiency $p_{ij}(0) = 1/2$ and does no better than random guessing, \ie, $\Pr(\hat X_{ij} = \bar X_j|e_{ij} = 0) =1/2$. An agent who puts in full effort $e_{ij} = 1$ attains her {\em maximum proficiency}, $\Pr(\hat X_{ij} = \bar X_j|e_{ij} = 1) = p_{ij}(1) = p_i$. Note that this maximum proficiency $p_i$ can be different for individual agents modeling their different abilities, and need {\em not} be known to the center. We assume that the maximum proficiency $p_i \ge \frac{1}{2}$ for all $i$--- this minimum requirement on agent ability can be ensured in online crowdsourcing settings by {\em prescreening} workers on a representative set of tasks (Amazon Mechanical Turk, for instance, offers the ability to prescreen workers~\cite{paolacci2010running,harris11csdm}, whereas in peer-grading applications such as on Coursera, students are given a set of pre-graded assignments to measure their grading abilities prior to grading their peers, the results of which can be used as a prescreen.)

We note that our results also extend easily to the case where the maximum proficiency of an agent {\em depends} on whether the object is of type $H$ or $L$, \ie, the probabilities of correctly observing the ground truth when putting in full effort are different for different ground truths, $\Pr(\hat X_{ij} = \bar X_j| \bar X_j = H) \neq \Pr(\hat X_{ij} = \bar X_j| \bar X_j = L)$ (of course, different agents can continue to have different maximum proficiences).

{\em Strategies.} Agents strategically choose {\em both} their effort levels and reports on each task to maximize their total utility, which is the difference between the reward received for their reports and the cost incurred in making evaluations. Formally, an agent $i$'s strategy is a vector of $D$ tuples $[(e_{ij}, f_{ij})]$, specifying her effort level $e_{ij}$ as well as the function $f_{ij}$ she uses to {\em map} her actual evaluation $\hat X_{ij}$ into her report $X_{ij}$ for each of her tasks. Note that since an agent's proficiency on a task $p_{ij}$ is a function of her strategically chosen effort $e_{ij}$, the proficiency of agent $i$ for task $j$ is {\em endogenous} in our model.

For a single task, we use the notation $(1, \truth)$ to denote the choice of full effort $e_{ij} = 1$ and truthfully reporting one's evaluation (\ie, $f_{ij}$ is the identity function $X_{ij} = \hat X_{ij}$), $(1, \invert)$ to denote full effort followed by inverting one's evaluation, and $(0, r)$ to denote the choice of exerting no effort ($e_{ij} = 0$) and simply reporting the outcome of a random coin toss with probability $r$ of returning $H$. We use $[(1,\truth)]$ to denote the strategy of using full effort and truthtelling on all of an agent's tasks, and similarly $[(1, X^c)]$ and $[(0,r)]$ for the other strategies.

{\em Mechanisms.} A mechanism in this setting takes as input the set of all received reports $X_{ij}$ and computes a reward for each agent based on her reports, as well as possibly the reports of other agents. Note that the mechanism has no access\footnote{Crowdsourcing is used typically precisely in scenarios where the number of tasks is too large for the principal (or a set of trusted agents chosen by the principal) to carry out herself, so it is at best feasible to verify the ground truth for a tiny fraction of all tasks, which fraction turns out to be inadequate (a formal statement is omitted here) to incentivize effort using knowledge of the $\bar X_j$.} to the underlying true qualities $\bar X_j$ for any task, and so cannot use the $\bar X_j$ to determine agents' rewards. A set of effort levels and reporting functions $[(e_{ij},f_{ij})]$ is a full-information Nash equilibrium of a mechanism if no agent $i$ can strictly improve her expected utility by choosing either a different effort level $\hat e_{ij}$, or a different function $\hat f_{ij}$ to map her evaluation $\hat X_{ij}$ into her report $X_{ij}$. Here, the expectation is over the randomness in agents' noisy evaluation of the underlying ground truth, as well as any randomness in the mechanism.

We will be interested in designing mechanisms for which it is (i) an equilibrium for all agents to put in full effort and report their evaluations truthfully on all tasks, \ie, use strategies $[(1, \truth)]$, and (ii) for which $[(1, \truth)]$ is the maximum utility (if not unique) equilibrium. We emphasize here that we do not address the problem of how to optimally aggregate the $T$ reports $X_{ij}$ for task $j$ into a final estimate of $\bar X_j$: this is an orthogonal problem requiring application-specific modeling; our only goal is to elicit the best possible judgements to aggregate, by ensuring that agents find it most profitable to put in maximum effort into their evaluations and then report these evaluations truthfully.

\section{Mechanism}
\label{s-mech}
The main idea behind our mechanism $\M$ is following. Recall that a mechanism does not have access to the true qualities $\bar X_j$, and therefore must compute rewards for agents that do not rely on directly observing $\bar X_{j}$. Since the only source of information about $\bar X_{j}$ comes from the reports $X_{ij}$, a natural solution is to reward based on some form of agreement between different agents reporting on $j$, similar to the peer-prediction setting~\cite{miller2005eliciting}. However, an easy way for agents to achieve perfect agreement with no effort is to always report $H$ (or $L$). With just one task, it is difficult for a mechanism to distinguish between the scenario where agents achieve agreement by making accurate, high-effort, evaluations of the same ground truth, and the low-effort scenario where agents achieve agreement by always reporting $H$, especially if $\ph$ is high. However, in our setting, we have the benefit of {\em multiple} tasks and ratings, which could potentially be used to distinguish between these two strategies and appropriately reward agents to incentivize high effort.

$\M$ uses the presence of multiple ratings to subtract out a {\em statistic} term $B_{ij}$ from the agreement score, chosen so that there is no benefit to making reports that are independent of $\bar X_j$--- 
roughly speaking, $\M$ rewards an agent $i$ for her report on task $j$ for agreeing with another `reference' agent $\ipj$'s report on the same task, but {\em only beyond} what would be expected if $i$ and $\ipj$ were randomly tossing coins with their respective empirical frequencies of heads.

Let $d$ denote the number of other reports made by $i$ and $\ipj$ that are used in the computation of this statistic term $B_{ij}$ based on the observed frequency of heads for each pair $(i,j)$. We use $\M_d$ to denote the version of $\M$ which uses $d$ other reports from each of $i$ and $\ipj$ to compute $B_{ij}$. To completely specify $\M_d$, we also need to specify a reference rater $\ipj $ as well as this set of $d$ (non-overlapping) tasks performed by $i$ and $\ipj$, for which we use the following notation. (We require these $d$ other tasks to be non-overlapping so that the reports for these tasks $X_{ik}$ and $X_{\ipj l}$ are independent\footnote{We assume that co-raters' identities are kept unknown to agents, so there is no collusion between $i$ and $\ipj$.}, which is necessary to achieve the incentive properties of $\M_d$.)

\begin{definition}[$S_{ij}, S_{\ipj j}$] \label{d-Sij} Consider agent $i$ and task $j \in J(i)$, and a reference rater $\ipj$. Given a value of $d$ ($1 \leq d \leq D-1$), let  $S_{ij}$ and $S_{\ipj j}$ be sets of $d$ non-overlapping tasks other than task $j$ performed by $i$ and $\ipj$ respectively, \ie,
\begin{align*}
& S_{ij} \subseteq J(i)\setminus j, \quad S_{\ipj j} \subseteq J(\ipj)\setminus j, \quad S_{ij}\cap S_{\ipj j} = \emptyset, \quad |S_{ij}|= |S_{\ipj j}| = d.
\end{align*}
\end{definition}


A mechanism $\M_d$ is completely specified by reference raters $\ipj$ and the sets $S_{ij}$ and $S_{\ipj j}$, and rewards agents as defined below. Note that $\M_d$ only uses agents' {\em reports} $X_{ij}$ to compute rewards and not their maximum {\em proficiencies} $p_i$, which therefore need not be known to the system.

\begin{definition}[Mechanism $\M_d$] \label{d-mech} $\M_d$ computes an agent $i$'s reward for her report $X_{ij} \in \{0,1\}$ on task $j$, $R_{ij}$,  by comparing against a `reference rater' $\ipj$'s report $X_{\ipj j}$ for $j$, as follows:
\begin{align}
R_{ij} &= A_{ij} - B_{ij}, \quad \mbox{where}\\
A_{ij}&= X_{ij}X_{\ipj j} + (1-X_{ij})(1-X_{\ipj j}), \quad \mbox{and}\nonumber \\
B_{ij} &= (\frac{\sum_{k\in S_{ij}} X_{ik}}{d})(\frac{\sum_{l\in S_{\ipj j}} X_{\ipj l}}{d}) + (1-\frac{\sum_{k\in S_{ij}} X_{ik}}{d})(1-\frac{\sum_{l\in S_{\ipj j}} X_{\ipj l}}{d} ),
\end{align}
where the sets $S_{ij}$ and $S_{\ipj j}$ in $B_{ij}$ are as in Definition \ref{d-Sij}. The final reward to an agent $i$ is $\beta R_i$, where $R_i = \sum_{j \in J(i)} R_{ij}$ and $\beta$ is simply a non-negative scaling parameter that is chosen based on agents' costs of effort.
\end{definition}

The first term, $A_{ij}$, in $R_{ij}$ is an `agreement' reward, and is $1$ when $i$ and $\ipj $ both agree on their report, \ie, when $X_{ij} = X_{\ipj j} = 1$  or when $X_{ij} = X_{\ipj j} = 0$. The second term $B_{ij}$ is the `statistic' term which, roughly speaking, deducts from the agreement reward whatever part of $i$ and $\ipj $'s agreement on task $j$ is to be `expected anyway' given their reporting statistics, \ie, the relative frequencies with which they report $H$ and $L$. This deduction is what gives $\M$ its nice incentive properties--- while $\M$ rewards agents for agreement via $A_{ij}$,
$\M$ also {\em penalizes for blind agreement} that agents achieve without effort, by subtracting out the $B_{ij}$ term corresponding to the expected frequency of agreement if $i$ and $\ipj $ were randomly choosing reports corresponding to their estimated means.

For example, suppose all agents were to always report $H$. Then $A_{ij}$ is always $1$, but $B_{ij}=1$ as well so that the net reward is $0$; similarly if agents chose their reports according to a random cointoss, even one with the `correct' bias $\ph$, the value of $A_{ij}$ is exactly equal to $B_{ij}$ since there is no correlation between the reports for a particular task, again leading to a reward of $0$. The reward function $R_{ij}$ is designed so that it {\em only} rewards agents when they put in effort into their evaluations, which leads to the desirable incentive properties of $\M_d$. (We note that there are other natural statistics which might incentivize agents away from low-effort reports--- \eg, rewarding reports which collectively have an empirical mean close to $\ph$, or for variance. However, it turns out that appropriately balancing the agreement term (which is necessary to ensure agents cannot simply report according to a cointoss with bias $\ph$) with a term penalizing blind agreement to simultaneously ensure that $[(1, \truth)]$ is an equilibrium {\em and} the most desirable equilibrium is hard to accomplish.)

There are two natural choices for the parameter $d$, \ie, how many reports of $i$ and $\ipj $ to include for estimating the statistic term that we subtract from the agreement score in $R_{ij}$\footnote{Understanding the effect of the parameter $d$ in our mechanisms, which appears irrelevant to the mechanism's behavior when agents are risk-neutral, is an interesting open question.}.
(i) In $\M_{D-1}$, we set $d = D-1$ and include all reports of agents $i$ and $\ipj$, except those on their common task $j$. Here, the non-overlap requirement for sets $S_{ij}$ and $S_{\ipj j}$ says that an agent $i$ and her reference rater $\ipj$ for task $j$ have {\em only} that task $j$ in common.
(ii) In $\M_1$, we set $d = 1$, \ie, subtract away the correlation between the report of $i$ and $\ipj$ on exactly one other non-overlapping task.
In $\M_1$, the non-overlap condition only requires that for each agent-task pair, there is a reference agent $\ipj$ available who has rated one other task that is different from the remaining tasks rated by $i$, a condition that is much easier to satisfy than that in $\M_{D-1}$.
In \S~\ref{s-eq}, we will see that $\M_1$ will require that the choices of $(j,j')$, where $\{j'\} = S_{ij}$ is the task used in the statistic term of $i$'s reward for task $j$, are such that each task $j'$ performed by $i$ is used exactly once to determine $R_{ij}$ for $j \neq j'$. Note that this is always feasible, for instance by using task $j+1$ in the statistic term for task $j$ for $j = 1, \ldots, D-1$ and task $1$ for task $D$.

\section{Analyzing $\M$}
\label{s-eq}
In this section, we analyze equilibrium behavior in $\M_d$. We begin with some notation and preliminaries.

\subsection{Preliminaries}
Recall that proficiency is the probability of correctly evaluating the true quality. We use $\sph$ (respectively $\spl$) to denote the probability that an agent observes $H$ (respectively $L$) when making evaluations with proficiency $p$, \ie,
the probability that $\hat X_{ij}=H$ is $\sph = p\ph + (1 - p)\pl$.
Similarly, $q[H], q[L]$ and $p_i[H], p_i[L]$ correspond to the probabilities of seeing $H$ and $L$ when making evaluations with proficiencies $q$ and $p_i$ respectively. \\

\noindent{\em Matrix representation of strategies.} We will frequently need to consider the space of all possible strategies an agent may use in the equilibrium analysis of $\M_d$.
While the choice of effort level $e_{ij}$ in an agent's strategy $[(e_{ij}, f_{ij})]$ is easily described--- there are only two possible effort levels $1$ and $0$--- the space of functions $f_{ij}$ through which an agent can map her evaluation $\hat X_{ij}$ into her report $X_{ij}$ is much larger. For instance, an agent could choose $f_{ij}$ corresponding to making an evaluation, performing a Bayesian update of her prior on $\bar X_j$, and choosing the report with the higher posterior probability.
We now discuss a way to represent strategies that will allow us to easily describe the set of all reporting functions $f_{ij}$.

An agent $i$'s evaluation $\hat X_{ij}$ can also be written as a two-dimensional vector $o^{ij} \in {\mathbb R}^2$, where $o^{ij} = \begin{bmatrix} 1 & 0 \end{bmatrix}^T$ if $i$ observes a $H$, and $o^{ij} = \begin{bmatrix} 0 & 1 \end{bmatrix}^T$ if $i$ observes a $L$, where ${\bf a}^T$ denotes the transpose of ${\bf a}$. For the purpose of analyzing $\M_d$, any choice of reporting function $f_{ij}$ can then be described via a $2 \times 2$ matrix
\[
M^{ij} = \begin{bmatrix}
x & 1-y\\
1-x & y
\end{bmatrix},
\]
where $x$ is the probability with which $i$ chooses to report $H$ after observing $H$, \ie, $x = \Pr(X_{ij} = H | \hat X_{ij} = H)$, and similarly $y = \Pr(X_{ij} = L | \hat X_{ij} = L)$. Observe that the choice of effort $e_{ij}$ affects {\em only} $o^{ij}$ and its `correctness', or correlation with the (vector representing the) actual quality $\bar X_j$, and the choice of reporting function $f_{ij}$ {\em only} affects $M^{ij}$.

Any reporting matrix $M^{ij}$ of the form above can be written as a convex combination of four matrices--- one for each of the $f_{ij}$ corresponding to (i) truthful reporting ($X_{ij} = \hat X_{ij}$) (ii) inverting ($X_{ij} = \hat X_{ij}^c$), and (iii, iv) always reporting $H$ or $L$ independent of one's evaluation ($X_{ij} = H$ and $X_{ij} = L$ respectively):
\[
M_{X} = \begin{bmatrix}
1 & 0\\
0 & 1
\end{bmatrix},
M_{X^c} = \begin{bmatrix}
0 & 1\\
1 & 0
\end{bmatrix},
M_{H} = \begin{bmatrix}
1 & 1\\
0 & 0
\end{bmatrix},
M_{L} = \begin{bmatrix}
0 & 0\\
1 & 1
\end{bmatrix}.
\]
That is, $M^{ij} = \al_1 M_{X} + \al_2 M_{X^c} + \al_3 M_{H} + \al_4 M_{L}$,  where $\al_1 = x-\al_3$, $\al_2 = 1-y-\al_3$, and $\al_3 = x-y$ and $\al_4 = 0$ if $x \geq y$, and $\al_3 = 0$ and $\al_4 = y-x$ if $y > x$. It is easily verified that $\al_i \geq 0$, and $\sum \al_i = 1$, so that this is a convex combination.
Since all possible reporting strategies $f_{ij}$ can be described by appropriately choosing the values of $x \in [0,1]$ and $y \in [0,1]$ in $M^{ij}$, every reporting function $f_{ij}$ can be written as a convex combination of these four matrices.

The agent's final report $X_{ij}$ is then described by the vector $M^{ij}o^{ij} \in {\mathbb R}^2$, where the first entry is the probability that $i$ reports $H$, \ie, $X_{ij} = H$, and the second entry is the probability that she reports $X_{ij}= L$. The expected reward of agent $i$ for task $j$ can therefore be written using the matrix-vector representation (where $^T$ denotes transpose and $\ones$ is the vector of all ones) as
\begin{align*}
E[R_{ij}] &= E[(M^{\ipj j}o^{\ipj j})^T M^{ij}o^{ij} + (\ones-M^{\ipj j}o^{\ipj j})^T(\ones-M^{ij}o^{ij})] \\
&-  [(M^{\ipj j}E[o^{\ipj j}])^TM^{ij}E[o^{ij}] + (\ones-M^{\ipj j}E[o^{\ipj j}])^T(1-M^{ij}E[o^{ij}])],
\end{align*}

which is linear in $M^{ij}$. So the payoff from an arbitrary reporting function $f_{ij}$ can be written as the corresponding linear combination of the payoffs from each of the `basis' functions (corresponding to $M_X, M_{X^c}, M_H$ and $M_L$) constituting $f_{ij} = M^{ij}$. We will use this to argue that it is adequate to consider deviations to each of the remaining basis reporting functions and show that they yield strictly lower reward to establish that $[(1,X)]$ is an equilibrium of $\M_d$. \\

\noindent{\em Equivalent strategies.} For the equilibrium analysis, we will use the following simple facts.
(i) The strategy $(0,X)$ (\ie, using zero effort but truthfully reporting one's evaluation) is equivalent to the strategy $(0,r)$ with $r = 1/2$, \ie, to the strategy of putting in no effort, and randomly reporting $H$ or $L$ independent of the evaluation $\hat X_{ij}$ with probability $1/2$ each.
(ii) The strategy $(1,r)$ is equivalent to the strategy $(0,r)$, since the report $X_{ij}$ in both cases is completely independent of the evaluation $\hat X_{ij}$ and therefore of $e_{ij}$.\\

\noindent{\bf Cost of effort.} While agents do incur a higher cost when using $e_{ij} = 1$ as compared to $e_{ij} = 0$, we will not need to explicitly deal with the cost in the equilibrium analysis--- if the reward from using a strategy where $e_{ij}=1$ is strictly greater than the reward from any strategy with $e_{ij}=0$, the rewards $R_{ij}$ can always be scaled appropriately using the factor $\beta$ (in Definition \ref{d-mech}) to ensure that the net utility (reward minus cost) is strictly greater as well.

We remark here that bounds on this scaling factor $\beta$ could be estimated empirically without requiring knowledge of the priors by estimating the cost of effort $c_{ij}$ from the maximum proficiencies obtained from a pre-screening (\S \ref{s-model}), by conducting a series of trials with increasing rewards and then using individual rationality to estimate the cost of effort from observed proficiencies in these trials.  \\

\noindent{\bf Individual rationality and non-negativity of payments.} The expected payments made by our mechanism to each agent are always nonnegative in the full-effort truthful reporting equilibrium, \ie, when all agents use strategies $[(1,X)]$. To ensure that the payments are also non-negative for every instance (of the tasks and reports) and not only in expectation, note that it suffices to add $1$ to the payments currently specified, since the penalty term $B_{ij}$ in the reward $R_{ij}$ is bounded above by $1$.
We also note that individual rationality can be achieved by using a value of $\beta$ large enough to ensure that the net utility $\beta R_{ij} - c(1)$ remains non-negative for all values of $\ph$--- while the expected payment $R_{ij}$ does go to zero as $\ph$ tends to $1$ (\ie, in the limit of vanishing uncertainty as the underlying ground truth is more and more likely to always be $H$ (or always be $L$)), as long as there is some bound $\epsilon > 0$ such that $\max\{\ph, \pl\} \leq 1- \eps$, a simple calculation can be used to determine a value $\beta^*(\eps)$ such that the resulting mechanism with $\beta  = \beta^*$ leads to nonnegative utilities for all agents in the full-effort truth-telling Nash equilibrium of $\M$.

\subsection{Equilibrium analysis}
We now analyze the equilibria of $\M_d$. Throughout, we index the tasks $J(i)$ corresponding to agent $i$ by $j \in \{1,\ldots,D\}$.

First, to illustrate the idea behind the mechanism, we prove the simpler result that $[(1, \truth)]$ is an equilibrium of $\M$ when agents all have equal proficiency $p_i = p$, and are restricted to choosing one common strategy for all their tasks.
\begin{proposition}
\label{p-easy}
Suppose all agents have the same maximum proficiency $p$, and are restricted to choosing the same strategy for each of their tasks. Then, all agents choosing $[(1, \truth)]$ is an equilibrium of $\M_d$ for all $d$, if $p \neq 1/2$.
\end{proposition}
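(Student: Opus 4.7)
The plan is to fix an arbitrary agent $i$ and, assuming every other agent plays $[(1,\truth)]$, show that the expected per-task reward $E[R_{ij}]$ is strictly maximized when $i$ also plays $(1,\truth)$; scaling the parameter $\beta$ will then absorb the cost of effort. Since the proposition restricts each agent to a single common strategy across her $D$ tasks, it suffices to analyze one $E[R_{ij}]$.

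I will begin by invoking the matrix representation of strategies from \S\ref{s-eq}: any reporting function $f_{ij}$ decomposes as a convex combination of the basis matrices $M_X, M_{X^c}, M_H, M_L$, and $E[R_{ij}]$ is linear in $M^{ij}$. Hence every mixed deviation is a convex combination of the eight pure (effort, basis-report) strategies, and it is enough to compare $(1,\truth)$ against each one. Invoking the equivalent-strategy observations ($(1,H)\sim(0,r{=}1)$, $(1,L)\sim(0,r{=}0)$, $(1,r)\sim(0,r)$, and $(0,\truth)\sim(0,r{=}1/2)$), the casework collapses to three archetypes: $(1,\truth)$, $(1,\invert)$, and the family $(0,r)$ for $r\in[0,1]$.

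Next I compute $E[A_{ij}]$ and $E[B_{ij}]$ in each case. The non-overlap condition in Definition~\ref{d-Sij}, together with independent task priors, makes the two averages $\sum_{k\in S_{ij}} X_{ik}/d$ and $\sum_{l\in S_{\ipj j}} X_{\ipj l}/d$ independent, so $E[B_{ij}]$ factors into the product of their marginals. Writing $s = p\ph + (1-p)\pl$, the $(1,\truth)$ case yields $E[A_{ij}] = p^2+(1-p)^2$ and $E[B_{ij}] = s^2+(1-s)^2$, so that
\[
E[R_{ij}] \;=\; 2\bigl(s(1-s) - p(1-p)\bigr).
\]
The key algebraic identity $s(1-s) - p(1-p) = \ph\pl(2p-1)^2$ then gives $E[R_{ij}] = 2\ph\pl(2p-1)^2$, which is strictly positive because $p\neq 1/2$ and the standing assumption $\max(\ph,\pl)<1$ forces $\ph\pl>0$. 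The $(1,\invert)$ case is symmetric: flipping $i$'s marginal from $s$ to $1-s$ yields $E[A_{ij}] = 2p(1-p)$ and $E[B_{ij}] = 2s(1-s)$, and hence $E[R_{ij}] = -2\ph\pl(2p-1)^2 < 0$. For any $(0,r)$ strategy, $i$'s reports are Bernoulli$(r)$ independent of $\bar X_j$ and of $\ipj$'s reports, so $E[A_{ij}]$ and $E[B_{ij}]$ both equal $rs+(1-r)(1-s)$, giving $E[R_{ij}] = 0$.

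Thus $(1,\truth)$ strictly beats every basis deviation in expected reward; by linearity this extends to all mixed reporting and mixed effort deviations. To convert the expected-reward gap into a net-utility gap I pick $\beta$ with $2\beta\ph\pl(2p-1)^2 > c(1)$, which makes $(1,\truth)$ a strict best response against every zero-effort deviation (the cost $c(1)$ is absorbed) and against $(1,\invert)$ (same cost, strictly smaller expected reward). The only genuinely technical step is the identity $s(1-s) - p(1-p) = \ph\pl(2p-1)^2$; everything else is linearity and bookkeeping on top of the matrix representation already developed in \S\ref{s-eq}.
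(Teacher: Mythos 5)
Your proof is correct and follows essentially the same route as the paper: reduce to the three archetype deviations $(1,\truth)$, $(1,\invert)$, and $(0,r)$ via linearity of the expected reward in the reporting matrix, compute the expected per-task reward in each case as positive, negative, and zero respectively, and absorb the cost of effort by scaling $\beta$. (Incidentally, your constant $2\ph\pl(2p-1)^2$ is the correct evaluation of $p^2+(1-p)^2-\sph^2-(1-\sph)^2$; the paper's displayed final line drops a harmless factor of $2$.)
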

\begin{proof}
Consider an agent $i$, and suppose all other agents use the strategy $(1, \truth)$ on all their tasks. As discussed in the preliminaries,
an agent's reward is linear in her reports fixing the strategies of other agents, so
it will be enough to show that there is no beneficial deviation to $(1,\invert)$, or $(0, r)$ for any $r \in [0,1]$ to establish an equilibrium.
 (as noted earlier, the choice of effort level is irrelevant when reporting $X_{ij}$ according to a the outcome of a random coin toss independent of the observed value of $X_{ij}$). The reward to agent $i$ when she uses strategy $[(1,\truth)]$ is
\begin{align*}
E[R_i((1,\truth))] &= \sum_{j=1}^D E\left[X_{ij}X_{\ipj j} + (1-X_{ij})(1-X_{\ipj j})\right]\\
&- E\left[\frac{\sum_{k \in S_{ij}} X_{ik}}{d}\frac{\sum_{l \in S_{\ipj j}}  X_{\ipj l}}{d} + (1-\frac{\sum_{k \in S_{ij}}  X_{ik}}{d})(1-\frac{\sum_{l \in S_{\ipj j}}  X_{\ipj l}}{d} )]\right]\\
&= D\left[p^2 + (1-p)^2 - (\sph^2 + (1-\sph)^2)\right]\\
&= D(p- p(H))(p- p(L))\\
&= D(2p-1)^2 \ph\pl,
\end{align*}
which is strictly positive if $p \neq 1/2$ and $\min(\ph,\pl) > 0$ (as assumed throughout), where we use $p - \sph = (2p-1)\pl$ and $p - \spl = (2p-1)\ph$.
The expected reward from deviating to $(1,\invert)$, when other agents are using $(1,\truth)$ is
\begin{align*}
E[R_i((1,\invert))] &= D\left(2p(1-p)- 2\sph (1-\sph)\right) = -D(p-p(H))(p- p(L)).
\end{align*}
Therefore, the expected reward from deviating to $(1,\invert)$ is negative and strictly smaller than the reward from $(1,\truth)$ if $p \neq 1/2$.
Finally, suppose agent $i$ deviates to playing $(0,r)$, \ie, reporting the outcome of a random coin toss with bias $r$ as her evaluation of $X_{ij}$. Her expected reward from using this strategy when other agents play according to $(1,\truth)$ is
\begin{align*}
E[R_i((0,r))] &= D\left(r\sph + (1-r)\sph - (r\sph + (1-r)(1-\sph))\right) = 0.
\end{align*}
(In fact, if either agent reports ratings on her tasks by tossing a random coin with any probability $r \in [0,1]$, independent of the underlying true realization of $X_{ij}$, the expected reward to agent $i$ is $0$.)
Therefore, if $p \neq 1/2$, deviating from $(1,\truth)$ leads to a strict decrease in reward to agent $i$. Hence, the rewards $R_{ij}$ can always be scaled appropriately to ensure that $[(1,\truth)]$ is an equilibrium of $\M$ for any values of the costs $c_i$.
\end{proof}

We will now move on to proving our main equilibrium result for $\M_d$, where agents can have different maximum proficiencies, as well as possibly use different strategies for different tasks. We begin with a technical lemma and a definition.

\begin{lemma}
\label{l-fpq}
Let $f_{\alpha}(p,q) = pq+(1-p)(1-q) - \alpha(\sph\sqh + (1-\sph)(1-\sqh))$.
If $\alpha \leq 1$,
(i) $f_{\alpha}(p,q)$ is strictly increasing in $p$ if $q > 1/2$, and strictly increasing in $q$ if $p > 1/2$.
(ii) $f_{\alpha}(p,q)$ is nonnegative if $p, q \geq 1/2$, and positive if $p,q > 1/2$.
(iii) Denote $f(p, q) \triangleq f_1(p, q)$. Then, $f(p, q) = f(q, p) = f(1-p, 1-q)$. Also $f(p, 1-q) = f(1-p, q) = -f(p, q)$.
\end{lemma}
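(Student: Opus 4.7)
The plan is to reduce $f_\alpha$ to a simple closed form by a convenient change of variables, after which all three claims become essentially computational.

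First, I would set $u = p - \tfrac12$, $v = q - \tfrac12$, and $c = \ph - \pl$. Since $\ph + \pl = 1$, the substitution $\sph = p\ph + (1-p)\pl$ rearranges to $\sph = \tfrac12 + uc$, and similarly $\sqh = \tfrac12 + vc$. A quick identity: for any reals $a, b$,
\begin{align*}
ab + (1-a)(1-b) = \tfrac12 + 2(a - \tfrac12)(b - \tfrac12).
\end{align*}
Applying this once with $(a,b) = (p,q)$ and once with $(a,b) = (\sph, \sqh)$ gives
\begin{align*}
f_\alpha(p,q) = \tfrac{1-\alpha}{2} + 2uv\bigl(1 - \alpha c^{2}\bigr).
\end{align*}

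Next I would record the key positivity fact: since $\ph + \pl = 1$ and $\max(\ph, \pl) < 1$, both priors are strictly between $0$ and $1$, and
\begin{align*}
1 - c^{2} = 1 - (\ph - \pl)^{2} = 4\,\ph\,\pl > 0.
\end{align*}
Hence for any $\alpha \le 1$ we get $1 - \alpha c^{2} > 0$ (trivially if $\alpha \le 0$; and if $0 < \alpha \le 1$ then $\alpha c^{2} \le c^{2} < 1$).

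From the closed form, (i) is immediate: $\partial f_\alpha/\partial p = 2v(1 - \alpha c^{2})$, which is strictly positive exactly when $v > 0$, i.e.\ $q > \tfrac12$; the symmetric statement gives strict monotonicity in $q$ when $p > \tfrac12$. Part (ii) is just a sign check: for $p, q \ge \tfrac12$ both $u, v \ge 0$ and the coefficient $(1-\alpha c^{2})$ is positive, and $(1-\alpha)/2 \ge 0$, so $f_\alpha \ge 0$; if moreover $p, q > \tfrac12$ then $uv > 0$, yielding $f_\alpha > 0$. Part (iii), where $\alpha = 1$, specializes to $f(p,q) = 2uv(1 - c^{2})$, which is visibly symmetric in $(p,q)$, invariant under $(u,v) \mapsto (-u,-v)$, and flips sign under a single sign change — giving exactly $f(p,q) = f(q,p) = f(1-p, 1-q)$ and $f(p, 1-q) = f(1-p, q) = -f(p,q)$.

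There is no real obstacle here; the only care point is confirming that $1 - \alpha c^{2} > 0$ under the global assumption $\max(\ph,\pl) < 1$ so that the strict monotonicity and strict positivity claims in (i)--(ii) genuinely go through. Once the closed form $f_\alpha = \tfrac{1-\alpha}{2} + 2uv(1 - \alpha c^{2})$ is in hand, everything else is a one-line check.
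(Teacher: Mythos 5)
Your proof is correct, and the computation checks out: the identity $ab+(1-a)(1-b)=\tfrac12+2(a-\tfrac12)(b-\tfrac12)$ together with $\sph=\tfrac12+(p-\tfrac12)(\ph-\pl)$ does give $f_\alpha=\tfrac{1-\alpha}{2}+2uv(1-\alpha c^2)$, and $1-\alpha c^2>0$ follows from $1-c^2=4\,\ph\,\pl>0$ under the standing assumption $\max(\ph,\pl)<1$. The paper's proof rests on the same two ingredients but stops short of the closed form: it rewrites $f_\alpha$ as a linear function of $p$ with coefficient $(2q-1)(1-\alpha(\ph-\pl)^2)$ (exactly your $\partial f_\alpha/\partial p$) and an unspecified constant $K_{-p}$, which suffices for part (i) but forces a separate two-step argument for part (ii) (namely that $f_1$ is increasing in $q$ and vanishes at $q=\tfrac12$, and that $f_\alpha$ is decreasing in $\alpha$ when $p,q\ge\tfrac12$), and a direct substitution for part (iii). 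Your explicit expression buys a uniform treatment: all three parts, including the symmetry and antisymmetry identities in (iii), reduce to sign checks on a single formula, and it also makes transparent why (iii) must be restricted to $\alpha=1$ --- the constant $\tfrac{1-\alpha}{2}$ breaks the antisymmetry under $q\mapsto 1-q$ for $\alpha<1$. Your derivation of $1-c^2=4\,\ph\,\pl$ is also slightly cleaner than the paper's bound $\ph-\pl<\ph<1$, which as written only controls one sign of $\ph-\pl$.
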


\begin{proof}
Recall that $\sph = p\ph + (1-p)\pl$, and similarly for $\sqh$.
{\small
\begin{align*}
f_{\alpha}(p,q)  &= p(2q - 1) + (1-q) -\alpha(p\ph+(1-p)\pl)(2\sqh - 1) - (1-\sqh) \\
& = p\left[(2q-1)- \alpha(\ph-\pl)(2\sqh - 1)\right] +  K_{-p}\\
&= p(2q-1)(1-\alpha(\ph-\pl)^2) + K_{-p},
\end{align*} }
where $K_{-p}$ is a term that does not depend on $p$, and we use $2\sqh - 1 = (2q-1)(\ph-\pl)$ in the last step.

Note that $\ph - \pl < \ph < 1$ if $\max(\ph,\pl) < 1$, so that $1-\alpha(\ph-\pl)^2 > 0$ if $\alpha \leq 1$. Therefore, $f_{\alpha}(p,q)$ is linear in $p$ with   strictly positive coefficient when $q > 1/2$ and $\alpha \leq 1$. An identical argument can be used for $q$ since $f_{\alpha}(p,q)$ can be written as a linear function of $q$ exactly as for $p$:
\begin{align*}
f_{\alpha}(p,q) = q(2p-1)(1-\alpha(\ph-\pl)^2) + K_{-q}.  	
\end{align*}
This proves the first claim.

For nonnegativity of $f_{\alpha}(p,q)$ on $p \in [1/2,1]$, we simply argue that $f_{1}(p,q)$ is increasing in $q$ when $p \in [1/2,1]$, and $0$ at $q = 1/2$. So for any $q > 1/2$, $f_1(p, q) \geq 0$ for any $p \in [1/2,1]$. But $f_{\alpha}(p,q)$ is decreasing in $\alpha$, so $f_{\alpha}(p,q)$ is nonnegative for any $\alpha \leq 1$ as well.

The final claims about $f(p, q)$ and $f(1-p, q)$ can be verified just by substituting the definitions of $\sph$ and $\sqh$ and from symmetry in $p$ and $q$.
\end{proof}

\begin{definition}[$T_{ij}$, $d_{ij}$]\label{d-Tij} Let $T_{ij}$ be the set of all tasks  $j' \neq j$ such that $j \in S_{ij'}$, \ie, $T_{ij}$ is the set of tasks $j'$ for which $i$'s report on task $j$ is used to compute the statistic term of $i$'s reward $R_{ij'}$ for task $j'$. We use $d_{ij} = |T_{ij}|$ to denote the number of such tasks $j'$.
\end{definition}
%
%
%

Our main equilibrium result states that under a mild set of conditions on the choice of reference raters $\ipj$ and sets $T_{ij}$, exerting full effort and reporting truthfully on all tasks is an equilibrium of $\M_d$--- {\em even when} agents have different maximum proficiencies and can choose a different strategy for each task (for instance, an agent could choose to shirk effort on some tasks and put in effort on the others). The main idea behind this result can be understood from the proof of Proposition {p-easy} above, where all agents had the same maximum proficiency $p_i = p$ and were restricted to using the same strategy for each task. There, the expected payoff from using $[(1,\truth)]$ is exactly $f(p,p)$ where $f$ is as defined in Lemma \ref{l-fpq}, while the payoff from playing $[(0,r)]$ is $0$ (independent of other agents' strategies); the payoff from deviating to $[(1,\invert)]$ when other agents play $[(1,\truth)]$ is $-f(p,p)$. Since $f(p,p) > 0$ for $p > 1/2$ and increases with $p$, it is a best response for every agent to attain maximum proficiency and truthfully report her evaluation.

Extending the argument when agents can have both different maximum proficiencies $p_i$ and use different strategy choices for each task requires more care, and are what necessitate the conditions on the task assignment in Theorem \ref{t-eqbm} below. We note that these conditions on $\M_d$ arise because of the generalization to {\em both} differing  abilities $p_i$ {\em and} being allowed to choose a different strategy for each task--- if either generalization is waived, \ie, if agents can choose different strategies per task but all have equal ability ($p_i = p$), {\em or} agents can have different abilities $p_i$ but are restricted to choosing the same strategy for all their tasks, $[(1,X)]$ can be shown to be an equilibrium of $\M_d$ even without imposing these conditions.

\begin{theorem}
\label{t-eqbm}
Suppose $p_i > 1/2$ for all $i$, and for each agent $i$, for each task $j \in J(i)$, (i) $d_{ij} = d$, and (ii) $E[p_{\ipj}] = E_{j_l \in T_{ij}}[p_{\ipjl}] \triangleq \pib$, where the expectation is over the randomness in the assignment of reference raters to tasks and the sets $T_{ij}$.
Then, $[(1, \truth)]$ is an equilibrium of $\M_d$.
\end{theorem}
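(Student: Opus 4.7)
The plan is to reduce the analysis to per-task, single-basis deviations from $[(1,\truth)]$ and then exploit an additive structure of the expected total reward. Throughout we fix every agent other than $i$ at $[(1,\truth)]$ and show $[(1,\truth)]$ is a best response for $i$. By the matrix representation from the preliminaries, any reporting strategy $M^{ij}$ of agent $i$ on task $j$ is a convex combination of the four basis matrices $M_X, M_{X^c}, M_H, M_L$. The expected reward $E[A_{ij}]$ is linear in $M^{ij}$, and because $S_{ij}$ and $S_{\ipj j}$ are disjoint, $E[B_{ij}]$ factors as $E[a_j]\,p_{\ipj}[H] + (1-E[a_j])(1-p_{\ipj}[H])$ with $E[a_j] = \frac{1}{d}\sum_{k \in S_{ij}} E[X_{ik}]$ linear in each $M^{ik}$. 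Hence $E[R_i]$ is multilinear in $(M^{ij})_{j}$; its maximum over the product of simplices is attained at a vertex, so it suffices to compare $(X)^D$ against pure profiles $s$ with $s_{ij} \in \{X, X^c, H, L\}$ for every $j$.

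Next I would collect the dependence of $E[R_i]$ on each $s_{ij}$. Writing $\mu_{ij}(s) = E[X_{ij}\mid s]$, we have $\mu_{ij}(X) = p_i[H]$, $\mu_{ij}(X^c) = 1-p_i[H]$, $\mu_{ij}(H)=1$, and $\mu_{ij}(L)=0$. The strategy $s_{ij}$ influences $E[R_i]$ in exactly two ways: directly through $E[A_{ij}]$, and through the contribution $\mu_{ij}(s_{ij})/d$ to $E[a_{j'}]$ for each $j' \in T_{ij}$. Collecting these terms and using that every co-rater is truthful yields
\[
E[R_i] \;=\; \sum_j G_j(s_{ij}) \;+\; \mathrm{const}, \qquad G_j(s) \;=\; E[A_{ij}\mid s] \;-\; \mu_{ij}(s)\,W_{ij},
\]
with $W_{ij} = \frac{1}{d}\sum_{j_l \in T_{ij}}(2\,p_{\ipjl}[H]-1)$. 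Maximizing $E[R_i]$ thus decouples into independently maximizing each $G_j$ over $\{X,X^c,H,L\}$.

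For each basis deviation $s \in \{X^c, H, L\}$ I would compute $E[G_j(s) - G_j(X)]$ in expectation over the mechanism's random assignment. Using $2\,p[H]-1 = (2p-1)(\ph-\pl)$ and condition~(ii), one obtains $E[2p_{\ipj}-1] = 2\pib-1$, and combined with condition~(i) so that $|T_{ij}|=d$, also $E[W_{ij}] = (2\pib-1)(\ph-\pl)$. A direct calculation using the identity $1-(\ph-\pl)^2 = 4\ph\pl$ (the same identity exploited in Lemma~\ref{l-fpq}) then gives
\[
E[G_j(X^c) - G_j(X)] \;=\; -4\ph\pl\,(2p_i-1)(2\pib-1),
\]
\[
E[G_j(H) - G_j(X)] \;=\; E[G_j(L) - G_j(X)] \;=\; -2\ph\pl\,(2p_i-1)(2\pib-1),
\]
all strictly negative under $p_i,\pib > 1/2$ and $\max(\ph,\pl)<1$. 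Combined with the reduction step, this shows $[(1,\truth)]$ strictly dominates every deviation in expected reward; the scaling parameter $\beta$ then absorbs the cost of effort $c_i(1)$ to secure the equilibrium in utility terms.

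The principal difficulty is the cross-task coupling: a deviation on task $j$ alters $R_{ij}$ directly via $A$ but also $d$ other rewards $R_{ij'}$ for $j' \in T_{ij}$ via $B$. The additive decomposition is the key structural observation that cleanly disentangles this coupling into $D$ independent sub-problems, and it is precisely conditions~(i) and~(ii) that force $E[W_{ij}]$ to line up with the expected reference-rater proficiency $\pib$ for task $j$, so that the sign of each $E[G_j(s)-G_j(X)]$ is dictated by $p_i$ and $\pib$ alone rather than by a potentially adversarial arrangement of reference-rater proficiencies across $T_{ij}$.
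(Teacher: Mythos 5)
Your proposal is correct and follows essentially the same route as the paper's proof: fix all other agents at $[(1,\truth)]$, use the linearity of the expected reward in the reporting matrices to reduce to the basis deviations, regroup the statistic terms by the task whose report they use (via $T_{ij}$ and conditions (i) and (ii)) so the reward decouples across tasks, and show each basis deviation strictly lowers the per-task term. The only cosmetic difference is that you evaluate the reward gaps in closed form (your expressions equal $-f(p_i,\pib)$ and $-2f(p_i,\pib)$ with $f$ as in Lemma~\ref{l-fpq}, and they check out) rather than invoking the monotonicity and sign properties of $f_{\alpha}$ from that lemma.
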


The first condition in Theorem~\ref{t-eqbm}, $d_{ij}=d$, says that each task $j$ performed by an agent $i$ must contribute to computing the reward via the statistic term for exactly $d$ other tasks in $J(i)$, where $d$ is the number of reports used to compute the `empirical frequency' of $H$ reports by $i$ in the statistic term. The second condition $E[p_{\ipj}] = E_{j_l \in T_{ij}}[p_{r_{j_l}(i)}]$ says that an agent $i$ should expect the average proficiency of her reference rater $\ipj$ to be equal for all the tasks that she performs, \ie, agent $i$ should not be able to identify any particular task where her reference raters are, on average, worse than the reference raters for her other tasks (intuitively, this can lead to agent $i$ shirking effort on this task being a profitable deviation). The first condition holds for each of the two specific mechanisms $\M_1$ and $\M_{D-1}$, and the second condition can be satisfied, essentially, by a randomization of the agents before assignment, as described in  \S\ref{s-graph}. We now prove the result.
\begin{proof}
Consider agent $i$, and suppose all other agents use strategy $[(1, \truth)]$, \ie, put in full effort with truthtelling on all their tasks. It will be enough to consider pure strategy deviations, and show that there is no beneficial deviation to $(1,\invert)$, or $(0, r)$ for any $r \in [0,1]$ on any single task or subset of tasks.

First, consider a particular assignment of reference raters $\ipj$ and the sets $S_{ij}$ (and therefore $T_{ij}$). The total expected reward to agent $i$ from all her $D$ tasks in this assignment, when other agents all play according to $[(1, \truth)]$ is
{\small
\begin{align*}
& E[R_i] = \sum_{j=1}^D E[X_{ij}X_{\ipj j} +  (1-X_{ij})(1-X_{\ipj j})] - \left[\frac{\sum_{k \in S_{ij}} E[X_{ik}]}{d}\sphipj + (1-\frac{\sum_{k \in S_{ij}} E[X_{ik}]}{d})(1-\sphipj)\right]\\
&= \sum_{j=1}^D E[X_{ij}X_{\ipj j} +  (1-X_{ij})(1-X_{\ipj j})] - \left[\frac{\sum_{k \in S_{ij}} E[X_{ik}]}{d}(2\sphipj-1)  + (1-\sphipj)\right]\\
&= \sum_{j=1}^D \Big[ E\left[X_{ij}X_{\ipj j} + (1-X_{ij})(1-X_{\ipj j})\right] - \sum_{j_{l}\in T_{ij}} \left(\frac{E[X_{ij}] }{d}(2p_{r_ {j_l}(i)}[H]-1)\right)  - (1-\sphipj) \Big],
\end{align*} }
where the expectation is over any randomness in the strategy of $i$ as well as randomness in $i$ and $\ipj$'s evaluations for each task $j$, and we rearrange to collect $X_{ij}$ terms in the last step.

Now, agent $i$ can receive different reference raters and task sets $S_{ij}$ in different assignments.
So to compute her expected reward, agent $i$ will also take an expectation over the randomness in the assignment of reference raters to tasks and the sets $S_{ij}$, which appear in the summation above via $T_{ij}$.

Recall the condition that $E[p_{\ipj}] =  E_{j_l \in T_{ij}}[p_{\ipjl}] \triangleq \pib$. Using this condition and taking the expectation over the randomness in the assignments of $\ipj $ and $S_{ij}$, the expected reward of $i$ is
{\small
\begin{align*}
E[R_i]  &= \sum_{j=1}^D \Big[ E\left[X_{ij}X_{\ipj j} + (1-X_{ij})(1-X_{\ipj j})\right] - \sum_{j_{l}\in T_{ij}} \left(\frac{E[X_{ij}] }{d}(2\sphib-1)\right)  - (1-\sphib) \Big]\\
&= \sum_{j=1}^D \Big[ E\left[X_{ij}X_{\ipj j} + (1-X_{ij})(1-X_{\ipj j})\right] - \frac{d_{ij}}{d}E[X_{ij}] (2\sphib-1) - (1-\sphib) \Big],
\end{align*} }
where $\sphib = E[p_{\ipj}[H]]=E_{j_l \in T_{ij}}[p_{\ipjl}[H]]$.

The expected reward to agent $i$, when she makes evaluations with proficiency $q_j$ for task $j$ and truthfully reports these evaluations ($X_{ij} = \hat X_{ij}$), is then
{\small
\begin{align}
\label{erew-pt}
E[R_i] & = \sum_{j=1}^D \Big[q_j\pib + (1-q_j)(1-\pib) -\frac{d_{ij}}{d}\sqjh(2\sphib-1) - (1-\sphib)  \Big]\nonumber\\
&= \sum_{j=1}^D \Big[q_j\pib + (1-q_j)(1-\pib) - \frac{d_{ij}}{d}\left(\sqjh\sphib + (1-\sqh)(1-\sphib)\right) - (1-\frac{d_{ij}}{d})(1-\sphib)\Big]\nonumber\\
&= \sum_{j=1}^D \left[f_{\frac{d_{ij}}{d}}(q_j,\pib) + (\frac{d_{ij}}{d} - 1)(1-\sphib)\right].
\end{align} }
where the expectation is taken over randomness in all agents' evaluations, as well as over randomness in the choices of $\ipj $ and $S_{ij}$.

We can now show that choosing full effort and truthtelling on all tasks is a best response when all other agents use $[(1, \truth)]$ if $d_{ij} = d$. First, by Lemma \ref{l-fpq}, $f_{\frac{d_{ij}}{d}}(q_j,\pib)$ is increasing in $q_j$ provided $\frac{d_{ij}}{d} \leq 1$, so agent $i$ should choose full effort to achieve her maximum proficiency $p_i$ on all tasks.
Next, note that in terms of the expected reward, using proficiency $q_j$ and reporting $\invert$ is equivalent to using proficiency $1 - q_j$ and reporting $X$.
So again by Lemma \ref{l-fpq}, deviating to $\invert$, \ie, $(1-q_j)$, on any task is strictly dominated by $\truth$ for $q_j > 1/2$ and $\pib > 1/2$.

Finally, if agent $i$ chooses $f_{ij}$ as the function which reports the outcome of a random cointoss with probability $r$ of $H$ for any task $j$, the component of $E[R_i]$ contributed by the term corresponding to $X_{ij}$ becomes
{\small
\begin{align*}
& E\left[X_{ij}X_{\ipj j} + (1-X_{ij})(1-X_{\ipj j})\right] - \frac{d_{ij}}{d}\left(E[X_{ij}]p_{\ipj}[H]  + (1-E[X_{ij}])(1-p_{\ipj}[H])\right)\\
&= rp_{\ipj}[H] + (1-r)(1-p_{\ipj}[H]) - (rp_{\ipj}[H] + (1-r)(1-p_{\ipj}[H]))) \\
&=0,
\end{align*} }
which is strictly smaller than the reward from $f_{ij} = \truth$ in (\ref{erew-pt}) if $q_j > 1/2$ and $\frac{d_{ij}}{d} \ge 1$, since $f(q_j, \pib)$ is strictly positive when $q_j, \pib > 1/2$ by Lemma \ref{l-fpq}.

Since we need $\frac{d_{ij}}{d} \leq 1$ to ensure that $(1, \invert)$ is not a profitable deviation, and $\frac{d_{ij}}{d} \geq 1$ to ensure that $(0,r)$ is not a profitable deviation, requiring $d_{ij} = d$ simultaneously satisfies both conditions. Therefore, if $d_{ij} = d$, deviating from $(1,\truth)$ on any task $j$ leads to a strict decrease in reward to agent $i$. Since the total reward to agent $i$ can be decomposed into the sum of $D$ terms which each depend only on the report $X_{ij}$ and therefore the strategy for the single task $j$, any deviation from $[(1,X)]$ for any single task or subset of tasks strictly decreases $i$'s expected reward.

Therefore, the rewards $R_i = \sum_{j \in J(i)} R_{ij}$ can always be scaled appropriately to ensure that $[(1,\truth)]$ is an equilibrium of $\M_d$.
\end{proof}

\noindent{\bf Other equilibria.} While $[(1,X)]$ is an equilibrium, $\M_d$ can have other equilibria as well--- for instance, the strategy $[(0,r)]$, where all agents report the outcome of a random cointoss with bias $r$ on each task, is also an equilibrium of $\M_d$ for all $r \in [0,1]$, albeit with $0$ reward to each agent. In fact, as we show in the next theorem, no equilibrium, symmetric or asymmetric, in pure or mixed strategies, can yield higher reward\footnote{
We note that another equilibrium which achieves the same maximum expected reward is $[(1, X^c)]$, where all agents put in full effort to make their evaluations, but then all invert their evaluations for their reports. However, $[(1, X^c)]$ is a rather unnatural, and risky, strategy, and one that is unlikely to arise in practice. Also, as we will see later, $[(1, X^c)]$ can also lead to lower rewards when there are some agents who always report truthfully.} to agents
than $[(1, \truth)]$, as long as agents `treat tasks equally' (for example, while an agent may choose to shirk effort on one task and work on all others, each of her tasks is equally likely to be the one she shirks on). We will refer to this as tasks being `apriori equivalent', so that agents cannot distinguish between tasks prior to putting in effort on them (or equivalently, the assignment of reference raters is such that agents will not find it beneficial (in terms of expected reward) to use a different strategy for a specific task). Note that this assumption is particularly reasonable in the context of applications where agents are recruited for a collection of similar tasks as in crowdsourced abuse/adult content identification, or in peer grading where each task is an anonymous student's solution to the same problem.

\begin{theorem}
\label{t-maxeq}
Suppose $p_i > 1/2$, and tasks are apriori equivalent. Then, the equilibrium where all agents choose $[(1, \truth)]$ yields maximum reward to each agent.
\end{theorem}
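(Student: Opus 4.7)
The plan is to prove the stronger claim that $E[R_i(\sigma)] \leq E[R_i([(1,\truth)])]$ holds for every strategy profile $\sigma$, not only for equilibria. By the matrix representation in the preliminaries, every reporting function is a convex combination of the four basis matrices $M_X, M_{X^c}, M_H, M_L$; paired with a (possibly mixed) choice of effort, this writes every per-task strategy as a distribution over pure basis strategies. The expected reward $E[R_i]$ is multilinear in the collection of these per-agent-per-task distributions: the $A_{ij}$ term is bilinear in $M^{ij}$ and $M^{\ipj j}$ via the joint distribution of same-task reports (which factors through $\bar X_j$), while the $B_{ij}$ statistic term factors as a product of independent marginals because $S_{ij}\cap S_{\ipj j}=\emptyset$ and the other involved tasks have independent realizations of $\bar X$. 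Since a multilinear function on a product of probability simplices attains its maximum at a vertex, it suffices to upper-bound $E[R_i]$ over pure basis profiles.

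For a pure basis profile, introduce the informativeness $\beta_{kj} = \Pr(X_{kj}=H\mid \bar X_j = H) - \Pr(X_{kj}=H\mid \bar X_j = L)$; direct computation gives $\beta_{kj} = 2p_k-1$ when $k$ plays $(1,\truth)$, $\beta_{kj} = -(2p_k-1)$ for $(1,\invert)$, and $\beta_{kj}=0$ for all other basis strategies (constant report or zero effort). Conditioning on $\bar X_j$ and using conditional independence of evaluations yields the covariance identity $\mathrm{Cov}(X_{ij},X_{\ipj j}) = \ph\pl\,\beta_{ij}\beta_{\ipj j}$. Expanding $E[R_{ij}] = E[A_{ij}] - E[B_{ij}]$ produces the main term $2\ph\pl\beta_{ij}\beta_{\ipj j}$ plus correction terms depending only on the marginals $\mu_{kj} = \Pr(X_{kj}=H)$ and their empirical averages $\bar\mu_k^{(j)}$ over $S_{ij},S_{\ipj j}$. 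Summing over $j\in J(i)$, the condition $d_{ij}=d$ (already required in Theorem~\ref{t-eqbm}) forces $\sum_j(\bar\mu_i^{(j)}-\mu_{ij})=0$ via a counting argument on how often $i$'s report on each task enters other tasks' statistic sums, and apriori equivalence makes the analogous reference-rater correction and the quadratic cross-term vanish in expectation over the random assignment.

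Hence, up to these zero-expectation corrections, $E[R_i] = 2\ph\pl\sum_j \beta_{ij}\beta_{\ipj j}$. Since $|\beta_{kj}|\leq 2p_k-1$ with equality only at $(1,\truth)$ or $(1,\invert)$, each summand satisfies $\beta_{ij}\beta_{\ipj j}\leq (2p_i-1)(2p_{\ipj}-1)$, and apriori equivalence gives $\sum_j(2p_{\ipj}-1)=D\cdot E_{\ipj}[2p_{\ipj}-1]$ in expectation. Combining, $E[R_i]\leq 2D\ph\pl(2p_i-1)E_{\ipj}[2p_{\ipj}-1]$, which is exactly the reward under the profile where every agent plays $[(1,\truth)]$ (the corrections vanish identically there because $\mu_{kj}$ is independent of $j$, so $\bar\mu_k^{(j)}=\mu_{kj}$), with the equivalent all-$[(1,\invert)]$ profile attaining the same value. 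The main obstacle is the clean cancellation of the marginal-correction terms in the sum over $j$—both $d_{ij}=d$ and apriori equivalence are essential for pushing these to zero in expectation—which is what makes the informativeness-product bound sharp at the all-$[(1,\truth)]$ vertex.
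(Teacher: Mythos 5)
Your core computation is correct and is essentially the same engine as the paper's proof, repackaged: under apriori equivalence the marginal terms of the statistic $B_{ij}$ match those of the agreement term $A_{ij}$, so $E[R_{ij}]=2\,\mathrm{Cov}(X_{ij},X_{\ipj j})=2\ph\pl\,\beta_{ij}\beta_{\ipj j}$, and since the informativeness $\beta_{kj}$ is an affine function of the mixture weights over the basis strategies with $|\beta_{kj}|\le 2p_k-1$ (equality only at full-effort $\truth$ or $\invert$), the product is maximized at $[(1,\truth)]$. This is exactly what the paper obtains by expanding in the basis $M_X,M_{X^c},M_H,M_L$ and identifying the multiplier of $\al_1\al_1'+\al_2\al_2'$ as $f(p,p')$; your covariance identity is arguably a cleaner way to see it.

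The route you take to get there, however, has a genuine gap: you first reduce by multilinearity to ``pure basis profiles'' in which each (agent, task) pair independently receives a basis strategy, and only afterwards invoke apriori equivalence to cancel the correction terms. Those vertices generically violate apriori equivalence, and at them the claimed cancellation does not occur---indeed your strengthened claim that $E[R_i(\sigma)]\le E[R_i([(1,\truth)])]$ for \emph{every} profile $\sigma$ is false. Take $\ph=\pl=1/2$, $D=2$, $d=1$, $S_{i1}=\{2\}$, $S_{i2}=\{1\}$, and a reference rater $i'$ for both tasks whose statistic tasks are disjoint from $\{1,2\}$; let $i$ and $i'$ both report the constant $H$ on task $1$ and the constant $L$ on task $2$, with $i'$ reporting $L$ and $H$ respectively on her two statistic tasks. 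Then $A_{ij}=1$ and $B_{ij}=0$ for both $j$, so $R_i=2$ with zero effort, while the all-$[(1,\truth)]$ profile yields at most $2D\ph\pl(2p_i-1)(2p_{\ipj}-1)\le 1$. So the linear correction $\sum_j(\mu_{\ipj j}-\bar\mu_{\ipj}^{(j)})$ and especially the quadratic cross-term $\sum_j(\mu_{ij}\mu_{\ipj j}-\bar\mu_i^{(j)}\bar\mu_{\ipj}^{(j)})$ cannot be argued away at arbitrary vertices; no counting argument or averaging over assignments will rescue this, because the inequality itself fails there. The fix is to impose apriori equivalence \emph{before} any reduction: then $\bar\mu_k^{(j)}=\mu_{kj}$ holds termwise, $E[R_{ij}]=2\,\mathrm{Cov}$ exactly with nothing left to cancel (and no appeal to $d_{ij}=d$ is needed for this theorem), and since $\beta_{ij}$ is affine in the strategy the bound $|\beta_{ij}|\le 2p_i-1$ holds over the whole convex hull without any vertex reduction. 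With that reordering your argument becomes a correct, and slightly slicker, version of the paper's proof.
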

\begin{proof}
Consider a particular agent $i$ and task $j$, and a single potential reference rater $\ipj$ for $(i,j)$. Recall from the preliminaries that agent $i$'s choice of $f_{ij}$ can be described via a matrix $M = \al_1 M_{X} + \al_2 M_{X^c} + \al_3 M_{H} + \al_4 M_{L}$, and that we denote $i$'s evaluation via a vector $o$, where $o = [1 \quad 0]^T$ if $i$ observes $H$ and $o = [0 \quad 1]^T$ if $i$ observes $L$. Similarly, let us describe $\ipj$'s choice of reporting function via the matrix $M'$ with corresponding coefficients $\al'_i$, and denote $\ipj$'s evaluation by $o'$.

Since tasks are apriori equivalent, each player $i$ (hence $\ipj$ too) uses strategies such that $E[X_{ij}] = E[X_{ik}]$ for all $j,k \in J(i)$. Then, we can rewrite the expected reward for agent $i$ on task $j$, when paired with reference rater $\ipj$, as
{\small
\begin{align*}
E[R_{ij}] =  2(E[X_{ij}X_{\ipj j}] - E[X_{ij}]E[X_{\ipj j}]).
\end{align*}}
Using the matrix-vector representation, substituting $M, M'$ with their representations in terms of the basis matrices and expanding, and evaluating the matrix-matrix products, we have
\[
X_{ij}X_{\ipj j}  = o'^TM'^T M o = o'^T R_M o,
\]
where
\begin{align*}
R_M &= (\al_1\al'_1  + \al_2\al'_2)I + \al_2\al'_1M_{X^c} + \al_1\al'_2M^T_{X^c} + (\al_3\al'_3  + \al_4\al'_4)\ones  + (\al_3\al'_1  + \al_4\al'_2)M_H + (\al_1\al'_3  + \al_2\al'_4)M^T_H \\
&+ (\al_4\al'_1  + \al_3\al'_2)M_L + (\al_1\al'_4  + \al_2\al'_3)M^T_L,
\end{align*}
and $I, \ones$ denote the identity matrix and the matrix of all ones in $R^{2 \times 2}$ respectively, and we use $M_{X}^TM_{X} = M_{X^c}^TM_{X^c} = I$, $M_H^T M_H = M_L^T M_L = \ones$, $M_{X^c}^T M_H = M_L$, $M_{X^c}^T M_L = M_H$, and $M_H^T M_L = \zeros$. Similarly,
\begin{align*}
E[X_{ij}]E[X_{\ipj j}] = E[o'^TM'^T]E[M o] = E[o'^T]R_M E[o],
\end{align*}
where $R_M$ is as defined above.

Now, note that $M_Ho = [o_1 + o_2 \quad 0]^T = [1 \quad 0]^T$ since $o_1 + o_2 = 1$ for any evaluation vector $o$ by definition, so that $E[o'^T M_Ho] = E[o'^T]E[M_Ho]$, since $M_Ho$ is a constant. The same is the case for each of the terms $E[o'^T\ones o], E[o'^T M^T_H o], E[o'^T M^T_L o], E[o'^T M_L o]$. Therefore, these terms cancel out when taking the difference\\ $E[X_{ij}X_{\ipj j}] - E[X_{ij}]E[X_{\ipj j}]$ (corresponding to the reward from either agent choosing to report $X_{ij}$ independent of her evaluation being $0$). Also note that $E[o^{ij}] = \begin{bmatrix} \sph & \spl \end{bmatrix}^T$ if agent $i$ makes evaluations with proficiency $p$. Suppose the agents use effort leading to proficiencies $p$ and $p'$ respectively. Then, we have
\begin{align*}
E[X_{ij}X_{\ipj j}] - E[X_{ij}]E[X_{\ipj j}] &= (\al_1\al'_1  + \al_2\al'_2)(E[o'^To]- E[o']^TE[o])+~\al_2\al'_1(E[o'^TM_{X^c}o]-E[o'^T]M_{X^c}E[o]) \\
& \quad+~\al_1\al'_2(E[o'^TM^T_{X^c}o]-E[o'^T]M^T_{X^c}E[o])\\
&= (\al_1\al'_1  + \al_2\al'_2)(E[o'_1o_1 + o_2o_2'] - E[o'_1]E[o_1] - E[o_2]E[o_2']) \\
&\quad + ~ (\al_2\al'_1 + \al_1\al'_2)(E[o'_1o_2 + o_1o_2'] - E[o'_1]E[o_2] - E[o_1]E[o_2'])\\
&=(\al_1\al'_1  + \al_2\al'_2)\Big(pp' + (1-p)(1-p') - \sph p'[H] - \\
& \quad (1-\sph)(1-p'[H])\Big) +  (\al_2\al'_1 + \al_1\al'_2)\Big(p(1-p')~ + \\
&\quad (1-p)p' - \sph(1- p'[H]) - (1-\sph)p'[H]\Big).
\end{align*}
Now, note that multiplier of $(\al_1\al'_1  + \al_2\al'_2)$ is precisely $f(p,p')$, which by Lemma \ref{l-fpq} is nonnegative if $p,p'\geq 1/2$, and strictly positive if $p, p' > 1/2$.  Also, for $p, p' \geq 1/2$, note that $\sph \leq p$ and $p'[H] \leq p'$. Now, the function $g(x,y) = x(1 - y) + y(1 - x)$ is decreasing in both $x$ and $y$ for $x, y \in [\frac{1}{2},1]$ (taking derivatives), so the multiplier of $(\al_2\al'_1 + \al_1\al'_2)$ is non-positive, and negative if $p,p' > 1/2$.

So the maximum value that $E[X_{ij}X_{\ipj j}] - E[X_{ij}]E[X_{\ipj j}]$ can take for nonnegative coefficients with $\sum \al_i = \sum \al'_i = 1$, is $f(p,p')$, which is obtained by setting $\al_3 = \al_4 = 0$, $\al'_3 = \al'_4 = 0$ (\ie, with no weight on random independent reporting), and $a_1 = \al'_1 = 1$, $\al_2 = \al'_2 = 0$ (or viceversa): this is because the maximum value of term $(\al_1\al'_1  + \al_2\al'_2)$ when $\al_2 = 1-\al_1$ and $\al'_2 = 1-\al'_1$ is $1$ and is achieved with these values, which also minimize the value of the term $(\al_2\al'_1 + \al_1\al'_2)$ with the non-positive multiplier, since $(\al_2\al'_1 + \al_1\al'_2) \geq 0$ and is equal to $0$ for these values of $\al_i, \al'_i$. Also, since $f(p,p')$ increases with increasing $p$ and $p'$, it is maximized when agents put in full effort and achieve their maximum proficiencies $p_i, p_{\ipj}$.

Therefore the expected reward for the single component of $E[R_{ij}]$ coming from a specific reference rater achieves its upper bound when both agents use $[(1,X)]$. The same argument applies for each reference rater, and therefore to the expected reward $E[R_{ij}]$, and establishes the claim.
\end{proof}

We next investigate what kinds of Nash equilibria might exist where agents use low effort with any positive probability. Apriori, it is reasonable to expect that there would be mixed-strategy equilibria where agents randomize between working and shirking, \ie, put in effort (choose $e_{ij} =1$) sometimes and not (choose $e_{ij} =0$) some other times. However, we next show that as long as tasks are apriori equivalent and agents only randomize between reporting truthfully and reporting the outcome of an independent random cointoss (\ie, they do not invert evaluations), the {\em only} equilibrium in which any agent uses any support on $(0,r)$ is the one in which {\em all} agents {\em always} use $(0,r)$ on all their tasks. To show this, we start with the following useful lemma saying that an agent who uses a low-effort strategy any fraction of the time will always have a beneficial deviation as long as some reference agent plays $(1, \truth)$ with some positive probability. 
Roughly speaking, this is because as long as there is some probability that an agent's reference rater plays $(1, \truth)$ rather than $(0,r)$, the agent strictly benefits by always playing $(1, \truth)$ to maximize the probability of both agents playing $(1, \truth)$, which is the only time the agent obtains a positive reward.

\begin{lemma}
\label{l-deviate}
Suppose the probability of agent $i$ using strategy $(1,\truth)$ is $\delta$ and strategy $(0,r_i)$ is $1-\delta$ for each task $j \in J(i)$.  Suppose $i$'s potential reference raters $\ipj$ use strategies $(1, \truth)$ and $(0, r_{\ipj})$ with probabilities $\eps_{\ipj}$ and $1-\eps_{\ipj}$ respectively, for each task $j \in J(i)$. If $\eps_{\ipj} > 0$ for any reference rater with proficiency $p_{\ipj} > 1/2$, then agent $i$ has a (strict) profitable deviation to $\delta' = 1$, \ie, to always using strategy $(1, \truth)$, for all values of $r_i \in [0,1]$.
\end{lemma}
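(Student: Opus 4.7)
The plan is to show that $E[R_i]$ is an affine function of $\delta$ with strictly positive slope independent of $r_i$, so that $\delta'=1$ is its unique maximizer.

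For each task $j$, I would first reduce the per-task expected reward to $E[R_{ij}] = 2\bigl(E[X_{ij}X_{\ipj j}] - E[X_{ij}]E[X_{\ipj j}]\bigr)$. This is the same reduction used in the proof of Theorem~\ref{t-maxeq}, and here it follows from two consequences of the hypothesis: (i) agent $i$ uses the same randomized strategy on every one of her tasks (and similarly for $\ipj$), so $E[X_{ik}] = E[X_{ij}]$ for each $k \in S_{ij}$; and (ii) the strategy randomizations and noisy evaluations on distinct tasks are independent, and since $S_{ij}$ and $S_{\ipj j}$ are disjoint from each other and from $\{j\}$, the reports they index are mutually independent, so $E[B_{ij}] = E[X_{ij}]E[X_{\ipj j}] + (1-E[X_{ij}])(1-E[X_{\ipj j}])$.

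Next, I would evaluate $E[X_{ij}X_{\ipj j}] - E[X_{ij}]E[X_{\ipj j}]$ by conditioning on the joint per-task strategy profile $(s_i, s_{\ipj}) \in \{(1,\truth),(0,r_i)\} \times \{(1,\truth),(0,r_{\ipj})\}$, whose four values have probabilities $\delta\eps_{\ipj}$, $\delta(1-\eps_{\ipj})$, $(1-\delta)\eps_{\ipj}$, and $(1-\delta)(1-\eps_{\ipj})$. Since $s_i \perp s_{\ipj}$, this difference equals the expected conditional covariance $E\bigl[\mathrm{Cov}(X_{ij}, X_{\ipj j}\mid s_i,s_{\ipj})\bigr]$. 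On any profile in which at least one agent plays $(0,r)$ on task $j$, her report is an independent coin flip and the conditional covariance vanishes. Only the $(1,\truth)\times(1,\truth)$ profile contributes, with conditional covariance $\ph\pl(2p_i-1)(2p_{\ipj}-1) = f(p_i,p_{\ipj})/2$ by a direct calculation; multiplying through by $2$ then yields $E[R_{ij}] = \delta\eps_{\ipj}\,f(p_i,p_{\ipj})$.

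Summing across tasks gives $E[R_i] = \delta \sum_{j=1}^D \eps_{\ipj}\,f(p_i,p_{\ipj})$. By Lemma~\ref{l-fpq}(ii), $f(p_i,p_{\ipj}) > 0$ whenever $p_i, p_{\ipj} > 1/2$, so the hypothesis that some $\ipj$ with $p_{\ipj}>1/2$ has $\eps_{\ipj}>0$ makes the coefficient of $\delta$ strictly positive and independent of $r_i$. Hence $E[R_i]$ is strictly increasing in $\delta$ for every $r_i \in [0,1]$, so deviating to $\delta'=1$ is strictly profitable. The main obstacle is the bookkeeping for the statistic term $B_{ij}$: justifying the factoring of $E[B_{ij}]$ requires a careful appeal to per-task independence of strategy randomizations and the disjointness of $S_{ij}, S_{\ipj j}$; once the reduction to $2\,\mathrm{Cov}(X_{ij}, X_{\ipj j})$ is in place, the conditional-independence collapse of the covariance onto the single contributing profile is a short case analysis.
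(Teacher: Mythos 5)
Your proof is correct and follows essentially the same route as the paper's: the paper likewise expands $E[R_{ij}]$ over the four joint per-task strategy profiles, observes that every profile in which either agent plays $(0,r)$ contributes zero, and is left with $E[R_{ij}] = \delta\sum_k a_k\eps_k f(p_i,p_k)$, concluding by linearity in $\delta$ and the positivity of $f$ from Lemma~\ref{l-fpq}. Your repackaging via $E[R_{ij}] = 2\,\mathrm{Cov}(X_{ij},X_{\ipj j})$ and the law of total covariance is a clean equivalent of that cancellation, and is in fact more explicit than the paper about why $E[B_{ij}]$ factors into marginals; the only detail you elide is averaging over the several potential reference raters for task $j$ with their assignment probabilities, which changes nothing in the conclusion.
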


\begin{proof}
Consider a particular task $j$, and let $k = 1, \ldots, K$ be the potential reference rater for $(i,j)$. Let $a_k$ denote the probability that $k$ is the reference rater for agent $i$ for task $j$. By linearity of expectation, $i$'s expected reward for $j$ can be written as
{\small \begin{align*}
E[R_{ij}]&= \sum_{k=1}^K a_{k}\Big[\delta\eps_{k}(p_i p_k + (1-p_i)(1- p_k) - (p_i[H] p_k[H] + (1-p_i[H])(1-p_k))) \\
&\quad+~ (1-\delta)\eps_k(r_ip_k[H] + (1-r_i)(1- p_k[H]) - (r_ip_k[H] + (1-r_i)(1-p_k))) \\
&\quad+~ \delta(1-\eps_k)(p_i[H] r_k + (1-p_i[H])(1- r_k) - (p_i[H]r_k + (1-p_i[H])(1-r_k))) \\
&\quad+~ (1-\delta)(1-\eps_k)(r_i r_k + (1-r_i)(1- r_k) - (r_i r_k + (1-r_i)(1-r_k)))\Big] \\
&= \delta \sum_{k} a_k\eps_k(p_i p_k + (1-p_i)(1- p_k) - (p_i[H] p_k[H] + (1-p_i[H])(1-p_k[H]))) \\
&= \delta \sum_{k} a_k\eps_k f_1(p_i, p_k).
\end{align*}}
Now, $E[R_{ij}]$ is linear in $\delta$, and by Lemma \ref{l-fpq}, the coefficient of $\delta$ is nonnegative for all $\eps_k$ and $p_k \geq 1/2$, and strictly greater than $0$ if $\eps_k > 0$ for some $k$ with $p_k > 1/2$. Therefore, $i$ can strictly increase her expected reward $E[R_{ij}]$ by increasing $\delta$ for any $\delta < 1$, as long as there is some reference agent $k$ with $\eps_k > 0$ and $p_k > 1/2$.

The same argument holds for each task $j \in J(i)$, and therefore to strictly improve $i$'s total reward $E[R_{i}]$, we only need one reference rater across all tasks to satisfy $\eps_k > 0$ and $p_k > 1/2$ to obtain a strictly beneficial deviation (recall that we assumed $p_i \geq 1/2$ for all $i$).
\end{proof}

This lemma immediately allows us to show that the only low-effort equilibria of $\M$ that we reasonably\footnote{(We say reasonably because of the technical possibility of equilibria where some agents mix over $(1, \invert)$ as well.)} need to be concerned about is the pure-strategy equilibrium in which $e_{ij}=0$ for all $i,j$. Note that different agents could use different $r_i$ (or even $r_{ij}$) in such equilibria, but all agents will receive reward $0$ in {\em all} such equilibria.

\begin{theorem}
Suppose every agent can be a reference rater with some non-zero probability for every other agent, and tasks are apriori equivalent. Then, the only equilibria (symmetric or asymmetric) in which agents mix between $(1, \truth)$ and any low-effort strategy $[(0, r_{ij})]$ with non-trivial support on $[(0, r_{ij})]$ are those where all agents always use low effort on all tasks.
\end{theorem}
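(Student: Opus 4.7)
The plan is to proceed by contradiction and lean entirely on Lemma~\ref{l-deviate}, which already does the heavy lifting: it shows that if \emph{any} reference rater plays $(1,\truth)$ with positive probability (and has maximum proficiency above $1/2$), then an agent mixing with any positive weight on a low-effort strategy has a strict profitable deviation to $\delta = 1$. So I just need to argue that the hypothesis of Lemma~\ref{l-deviate} is triggered unless the candidate equilibrium is the trivial all-low-effort one.

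Suppose for contradiction that there exists an equilibrium of the stated form (agents mixing over $(1,\truth)$ and some low-effort $(0, r_{ij})$ with non-trivial support on the low-effort side) in which \emph{some} agent, call her $k$, plays $(1,\truth)$ on some task with strictly positive probability $\eps_k > 0$. By the a priori equivalence of tasks, $k$ uses the same mixing probability on every task in $J(k)$; equivalently, from the perspective of any other agent $i$, whenever $k$ appears as a reference rater for $i$ on any of $i$'s tasks, $k$ plays $(1,\truth)$ with probability $\eps_k > 0$. Since every agent may serve as a reference rater for every other agent with non-zero probability, $k$ is, with positive probability, the reference rater $\ipj$ for every task $j \in J(i)$.

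Now pick any agent $i$ who, per the hypothesis, has non-trivial support on the low-effort strategy $(0, r_{ij})$, i.e., whose probability of playing $(1,\truth)$ is some $\delta < 1$. Lemma~\ref{l-deviate} applies directly: there is at least one potential reference rater (namely $k$) with $p_k > 1/2$ (recall $p_i > 1/2$ is assumed throughout) and with $\eps_k > 0$, so $i$ strictly improves her expected reward by deviating to $\delta' = 1$ on every task. This contradicts the assumption that we are at an equilibrium. Hence in any equilibrium of the stated form, \emph{no} agent can place positive probability on $(1,\truth)$ on any task---which, since the support is restricted to $\{(1,\truth), (0, r_{ij})\}$, forces every agent to play $(0, r_{ij})$ with probability one on every task, as claimed.

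The only subtlety I would want to be careful about is matching the per-task mixing allowed in the theorem statement to the uniform-across-tasks setup of Lemma~\ref{l-deviate}: the a priori equivalence hypothesis collapses these, because any mixing profile can be averaged over tasks without changing expected rewards, and Lemma~\ref{l-deviate}'s conclusion (strict improvement from raising $\delta$) holds task-by-task as long as some reference rater places positive weight on $(1,\truth)$ on that task. That is the main (mild) obstacle; once it is handled, the contradiction is immediate from Lemma~\ref{l-deviate}.
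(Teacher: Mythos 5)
Your proposal is correct and matches the paper's intended argument exactly: the paper states this theorem with no separate proof, deriving it as an immediate consequence of Lemma~\ref{l-deviate} in just the way you do (any positive weight on $(1,\truth)$ by some potential reference rater triggers a strict profitable deviation to $\delta'=1$ for any agent with residual weight on low effort, contradicting equilibrium). The only nit is the corner case where the agent $k$ with $\eps_k>0$ is herself the one with non-trivial low-effort support and all others play $(1,\truth)$ purely; there your witness ``namely $k$'' must be replaced by any of $k$'s own reference raters (who then have $\eps=1$), after which the same lemma closes the argument.
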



\noindent{\bf Eliminating low-effort equilibria.} Our final result uses Lemma \ref{l-deviate} to obtain a result about eliminating low-effort equilibria. Suppose there are some trusted agents (for example, an instructor or TA in the peer-grading context or workers with long histories of accurate evaluations or good performance in crowdsourcing platforms) who always report truthfully with proficiency $t > 1/2$. Let $\eps_t$ denote the minimum probability, over all agents $i$, that the reference rater for agent $i$ is such a trusted agent (note that we can ensure $\eps_t >0$ by having the trusted agent randomly choose each task with positive probability). Lemma \ref{l-deviate} immediately gives us the following result for $\M_d$, arising from the fact that the reward from playing a random strategy $(0,r)$ is exactly $0$---the presence of trusted agents with a non-zero probability, however small, is enough to eliminate low-effort equilibria altogether. 
\begin{theorem}
\label{t-norand}
Suppose $\eps_t > 0$. Then $[(0,r_{ij})]$ is not an equilibrium of $\M$ for any $r_{ij} \in [0,1]$.
\end{theorem}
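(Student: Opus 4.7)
The plan is to derive the theorem as an essentially direct consequence of Lemma \ref{l-deviate}. Suppose, for contradiction, that $[(0, r_{ij})]$ is an equilibrium for some choice of the $r_{ij}$'s. In this candidate equilibrium, every agent's report on every task is independent of the underlying ground truth, so by the computation used in Proposition \ref{p-easy} (and reiterated in the preliminaries), the expected reward of any agent is exactly $0$: both $A_{ij}$ and $B_{ij}$ evaluate to the same quantity when reports are produced by an independent cointoss, irrespective of the bias.

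Next, I fix an arbitrary agent $i$ and consider the unilateral deviation in which $i$ switches to $[(1, \truth)]$ on all her tasks. To apply Lemma \ref{l-deviate}, I need to cast the situation in the lemma's language: currently $i$ plays $(1, \truth)$ with probability $\delta = 0$ and $(0, r_i)$ with probability $1-\delta = 1$, and each potential reference rater $\ipj$ plays $(1, \truth)$ with some probability $\eps_{\ipj}$ and $(0, r_{\ipj})$ with probability $1 - \eps_{\ipj}$. By the hypothesis $\eps_t > 0$, for each task $j \in J(i)$ the reference rater $\ipj$ is, with probability at least $\eps_t$, a trusted agent who plays $(1, \truth)$ with proficiency $t > 1/2$. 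Thus in the mixture over reference-rater identities, at least one rater $k$ satisfies $\eps_k \ge \eps_t > 0$ with $p_k \ge t > 1/2$, which is exactly the condition invoked by Lemma \ref{l-deviate}.

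Applying the lemma, increasing $\delta$ strictly improves $i$'s expected reward on each task $j \in J(i)$; in particular, moving from $\delta = 0$ to $\delta = 1$ yields an expected reward of at least $\eps_t \, f_1(p_i, t) > 0$ per task, where positivity follows from $p_i, t > 1/2$ via Lemma \ref{l-fpq}(ii). Since the baseline reward under $[(0, r_{ij})]$ was $0$, this is a strictly profitable deviation, contradicting the assumed equilibrium. Choosing $\beta$ large enough only amplifies the strict gap, so the conclusion is unaffected by the cost of effort.

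The only real subtlety, and the one I would be most careful about, is verifying that trusted agents indeed act as reference raters with probability at least $\eps_t$ for every agent $i$ on every task $j$---this is precisely the definition of $\eps_t$ given in the paragraph preceding the theorem, so the assumption $\eps_t > 0$ makes Lemma \ref{l-deviate} applicable uniformly across all tasks of $i$, and the argument requires no further machinery.
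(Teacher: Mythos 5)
Your proposal is correct and takes essentially the same approach as the paper: the paper itself introduces the theorem as an immediate consequence of Lemma \ref{l-deviate}, and its written proof simply carries out the same computation explicitly (zero reward under $[(0,r_{ij})]$, strictly positive coefficient $\eps_t f_1(p,t)$ on the deviation to $(1,\truth)$ via Lemma \ref{l-fpq}, then rescaling by $\beta$ to absorb the cost of effort). Your explicit invocation of the lemma with $\delta=0$ and the trusted agent as the reference rater with $\eps_k=1$, $a_k\ge\eps_t>0$, $p_k=t>1/2$ is a faithful packaging of the identical argument.
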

\begin{proof}
Suppose all agents except the trusted agent use the strategy $(0,r_{ij})$, and $\eps_t$ is the probability that the trusted agent is the reference rater for any agent-task pair. Then, since agent $i$ reports $X_{ij}$ according to a random coin toss independent of the actual realization of $j$, the payoff from any reference rater, whether the trusted agent or another agent playing $(0,r)$ is $0$. For notational simplicity, let $r = r_{ij}$, $r'=r_{\ipj j}$.
{\small \begin{align*}
E[R_{ij}]& = \eps_t(rt[H] + (1-r)(1-t[H])  - (rt[H] + (1-r)(1-t[H])) \\
&\quad +~ (1-\eps_t)(r r' + (1-r)(1-r') - (rr' + (1-r)(1-r')))\\
&=0.
\end{align*} }
By deviating to $(1,\truth)$, agent $i$ can strictly improve her payoff as long as $\eps_t > 0$ and $t, p > 1/2$, since her expected reward from this deviation is
{\small \begin{align*}
E[R_{ij}] &= \eps_t(pt + (1-p)(1-t) - (\sph t[H] + (1-\sph)(1-t[H])) \\
&\quad + ~(1-\eps_t)(r r' + (1-r)(1-r') - (rr' + (1-r)(1-r'))\\
&> 0,
\end{align*} }
since the coefficient of $\eps_t$ is positive for $t, p > 1/2$ by Lemma \ref{l-fpq}.
Therefore, there is a strictly beneficial deviation to $(1, \truth)$, so there is a choice of multiplier for the reward such that the payoff to agent $i$, which is the difference between the reward and the cost of effort $c$, is strictly positive as well. So $(0,r_{ij})$ is not an equilibrium of $\M$ when $\eps_t > 0$.

\end{proof}

This result, while simple, is fairly strong: as long as some positive fraction of the population can be trusted to always report truthfully with proficiency greater than $1/2$, the only reasonable\footnote{Again, we say reasonable rather than unique because $(1, \invert)$ does remain an equilibrium of $\M$ for all $\eps_t$ less than a threshold value--- however, in addition to being an unnatural and risky strategy, this equilibrium yields strictly smaller payoffs than $[(1, \truth)]$ when $\eps_t > 0$. Note also that the introduction of such trusted agents does not introduce new equilibria, and that $[(1, \truth)]$ remains an equilibrium of $\M$.)}
 equilibrium of $\M$ is the high-effort equilibrium $[(1, \truth)]$, no matter {\em how small} this fraction. In particular, note that $\M$ does not need to assign a higher reward for agreement with a trusted agent to achieve this result, and therefore {\em does not need to know the identity} of the trusted agents. In contrast, the mechanism which rewards agents for agreement with a reference rater without subtracting out our statistic term must use a higher reward $w(\eps_t)$ for agreement with the trusted agents which increases as $\frac{1}{\eps_t}$ to eliminate low-effort equilibria\footnote{The same is the case for a mechanism based on rewarding for the `right' variance, which does retain $[(1,X)]$ as a maximum reward equilibrium, but still requires identifying the trusted agents and rewarding extra for agreement with them.}--- this, in addition to being undesirably large, also requires identification of trusted agents.

\section{Creating the Task Assignment}
\label{s-graph}
While in some crowdsourcing settings, agents choose tasks at will, there are also applications where a principal can potentially choose an assignment of a collection of her tasks among some assembled pool of workers. In this section, we present a simple algorithm to design assignment of tasks to agents such we can satisfy the condition in Theorem~\ref{t-eqbm} for mechanism $\M_{D-1}$, \ie, when $d = D-1$. We note that with this assignment of tasks to agents, choosing reference raters appropriately is trivially feasible for $d = 1$, \ie, for $\M_1$, and ensuring $d_{ij} = d$ is also easy as described in \S \ref{s-mech}.

We start out by randomly permuting all agents using a permutation $\pi$. For simplicity of presentation we assume that $\frac{m}{D}$ ($=\frac{n}{T}$) is an integer. The $m$ tasks are divided into $\frac{m}{D}$ task-blocks, each containing $D$ tasks. Similarly, the $n$ agents are divided into $T$ agent blocks, each containing $\frac{n}{T}$ agents. We number the task-blocks by $b=1,\ldots,\frac{m}{D}$ and the agent blocks by $a=1,\ldots,T$. The agents in block $a$ are thus $(a-1)\frac{n}{T} + 1,\ldots,a\frac{n}{T}$ and the tasks in block $b$ are $(b-1)D+1,\ldots,bD$.

We first describe the algorithm and then show that it produces an assignment that satisfies the conditions required in the definition of $\M_{D-1}$, in particular that for each agent-task pair, it is possible to choose a reference rater who has only that task in common with this agent. The algorithm works as follows: we assign tasks for agents starting from the agent block $a=1$ onwards. For block $1$, each agent $i'$ in the block is assigned all the tasks corresponding to the task block $i'$ (recall that number of agents in a block equals $\frac{n}{T} = \frac{m}{D}$, the number of task-blocks). This completes fills up the capacity of the agents in block $1$. For blocks $a = 2,\ldots, T$, consecutively, the agent $(a - 1)\frac{n}{T} + i'$ is assigned $D$ tasks $\{i', i' + \frac{m}{D},\ldots, i' + \frac{m}{D}(D-1)\}$, for $i' =1,\ldots,\frac{n}{T}$.

The above assignment completely describes the sets $J(i)$ and $I(j)$ for every agent $i$ and task $j$. For each task $j$, let $i^*_j$ denote the unique agent in block $1$ who works on task $j$. We define the reference raters as follows: for each agent-task pair $(i, j)$, if $i$ lies in blocks $\{2,\ldots,T\}$,  define the reference rater $\ipj  = i^*_j$. If $i$ lies in block $1$, define the reference rater to be any other user who is working on this task. Note that for $d= D-1$, the sets $S_{ij}$ and $S_{\ipj  j}$ are exactly $S_{ij} = J(i) \setminus \{j\}$ and $S_{\ipj  j}=J(\ipj) \setminus \{j\}$.

The following lemma proves two things--- first, the assignment above is actually feasible under fairly mild conditions, and second, that the choice of reference raters satisfies the conditions in the definition of $\M$ and those required by Theorem \ref{t-eqbm}.

\begin{lemma}
If $m \ge D^2$, the above algorithm generates a feasible assignment, i.e. every agent is assigned exactly $D$ tasks and every task to $T$ agents. Also, for agent-task pair $(i,j)$, the reference rater $\ipj$ satisfies $J(\ipj ) \cap J(i) = \{j\}$. Furthermore, $E[p_{\ipj }] = E_{j_l \in T_{ij}}[p_{i'_{j_l}}]$.
\end{lemma}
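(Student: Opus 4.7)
The plan is to address the lemma's three claims in order.

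For feasibility I would count incidences on both sides. Block $1$ contains $n/T = m/D$ agents, each of whom receives the full task-block of size $D$ corresponding to their index; since the task-blocks partition $\{1,\ldots,m\}$, every task gets exactly one block-$1$ rater and every block-$1$ agent is assigned exactly $D$ tasks. For any block $a \in \{2,\ldots,T\}$, as $i'$ ranges over $\{1,\ldots,m/D\}$ and $l$ over $\{0,\ldots,D-1\}$, the map $(i',l) \mapsto i' + l\cdot(m/D)$ is a bijection onto $\{1,\ldots,m\}$, so block $a$ also contributes exactly one rater per task while assigning each of its agents exactly $D$ tasks. Summing over the $T$ blocks gives $T$ raters per task.

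For the disjointness claim I would proceed by case analysis on the blocks containing $i$ and $\ipj$. The algorithm guarantees that in every pair $(i, \ipj)$ exactly one lies in block $1$ and the other in some block $a \geq 2$: when $i$ lies in block $\geq 2$ we set $\ipj = i^*_j$, and when $i$ lies in block $1$ the fact that $j$ has a unique block-$1$ rater (namely $i$) forces $\ipj$ into some block $a \geq 2$. The task-set of any block-$1$ agent is a block of $D$ consecutive integers, while the task-set of any block-$\geq 2$ agent is an arithmetic progression with common difference $m/D$. The hypothesis $m \geq D^2$ gives $m/D \geq D$, so any two elements of such an AP differ by at least $D$; hence any window of $D$ consecutive integers contains at most one AP element. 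Thus $|J(i) \cap J(\ipj)| \leq 1$, and since $j$ belongs to both sets, the intersection is exactly $\{j\}$.

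For the expected-proficiency equality I would invoke the symmetry of the uniform random permutation $\pi$ used to assign agents to positions. Conditional on the position of agent $i$, the remaining $n-1$ agents are placed uniformly at random among the remaining $n-1$ positions; in particular, for any single other position $k$, the identity of the agent placed there is uniform over the $n-1$ agents distinct from $i$, so its expected proficiency equals $\frac{1}{n-1}\sum_{i'' \neq i} p_{i''}$, which we denote $\pib$. Whether $i$ lies in block $1$ (so $\ipj$ is drawn from a specified set of other positions) or in block $\geq 2$ (so $\ipj$ is a single specified other position), the marginal distribution of the agent at $\ipj$ is uniform over the $n-1$ agents other than $i$, yielding $E[p_{\ipj}] = \pib$ for every $j \in J(i)$. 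Applying the same reasoning to each $j_l \in T_{ij}$ gives $E[p_{\ipjl}] = \pib$, and averaging over $j_l \in T_{ij}$ preserves this common value.

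The part I expect to require the most care is the disjointness argument: one must verify that the algorithm never pairs two block-$\geq 2$ agents (whose task-sets, being arithmetic progressions with the same common difference, can coincide on all $D$ tasks if the phases match) and that every cross-block pairing genuinely exploits the AP spacing that $m \geq D^2$ guarantees. Feasibility reduces to a bijection count, and the expected-proficiency claim is a direct consequence of the exchangeability of the remaining positions under a uniform random permutation.
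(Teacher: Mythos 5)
Your proof is correct and follows essentially the same route as the paper's: per-block counting for feasibility, the observation that $m \ge D^2$ forces $m/D \ge D$ so a block-$\ge 2$ agent's arithmetic-progression task set meets any task-block in at most one element, and exchangeability under the random permutation for the expected-proficiency condition. You simply spell out several steps the paper leaves terse (the explicit bijection $(i',l)\mapsto i'+l\cdot m/D$ and the uniform marginal of the agent occupying the reference-rater position), which is a faithful elaboration rather than a different argument.
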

\begin{proof}
Agents in block $1$ are clearly assigned to their full capacity. For blocks $a = 2, \ldots, T$, for every agent $i = (a - 1)\frac{n}{T} + i'$, the set $I(i) = \{i', i' + \frac{m}{D},\ldots, i' + \frac{m}{D}(D-1)\}$. Note that for each $i$, the above values are all distinct, and that $i' + \frac{m}{D}(D-1) \leq \frac{n}{T} + \frac{m}{D}(D-1) = m$. Thus every agent's assignment is feasible. Since the total capacity of agents equals the total capacity of the tasks, the tasks are also assigned completely, and to distinct agents.

In order to see that the choice of reference raters is feasible, note that if $D^2 \le m$, then $\frac{m}{D} \ge D$, and hence the tasks for each agent belong to distinct blocks. For agent-task pairs $(i, j)$ where the agents are in blocks $2,\ldots,T$, the reference rater $\ipj = i^*_j$, the unique agent in block $1$ who worked only on the task-block that $j$ belongs to. By the above argument, $i$ does not work on any other task from this block, and hence $J(i^*_j) \cap J(i) = \{j\}$. By the same argument, $i$ is also a feasible reference rater for $i^*_j$ on task $j$. Thus, the choice of reference raters satisfies the condition for $\M_{D-1}$.

Finally, the expectation condition follows simply from the random permutation applied to the set of agents at the beginning of the construction.

\end{proof}

\section{Discussion}
\label{s-disc}
In this paper, we introduced the problem of information elicitation when agents' proficiencies are endogenously determined as a function of their effort, and presented a simple mechanism which uses the presence of multiple tasks to identify and penalize low-effort agreement to incentivize effort when tasks have binary types. Our mechanism has the property that maximum effort followed by truthful reporting is the Nash equilibrium with maximum payoff to all agents, including mixed strategy equilibria. In addition to handling endogenous agent proficiencies, to the best of our knowledge this is the first mechanism for information elicitation with this 'best Nash equilibrium' property over all pure and over mixed strategy equilibria that  
requires agents to only report their own evaluations (\ie, without requiring `prediction' reports of their beliefs about other agents' reports), and does not impose any requirement on a diverging number of agent reports per task to achieve its incentive properties.  Our mechanism provides a starting point for designing information elicitation mechanisms for several crowdsourcing settings where proficiency is an endogenous, effort-dependent choice, such as image labeling, tagging, and peer grading in online education.

We use the simplest possible model that captures the complexities arising from strategically determined agent proficiencies, leading to a number of immediate directions for further work. First, our underlying outcome space is binary ($H$ or $L$)--- modeling and extending the mechanism to allow a richer space of outcomes and feedback is one of the most immediate and challenging directions for further work. Also, our model of effort is binary, where agents either exert full effort and achieve maximum proficiency, or exert no effort to achieve the baseline proficiency. While our results extend to a model where proficiency increases linearly with cost, a natural question is how they extend to more general models, for example, with convex costs.  Finally, a very interesting direction is that of {\em heterogenous} tasks with task-specific priors and abilities. In our model, tasks are homogenous with the same prior $\ph$, and agents have the same cost and maximum proficiency for each task. If tasks differ in difficulty, and agents can observe the difficulty of a task prior to putting in effort, there are clear incentives to shirk on harder tasks while putting in effort for the easier ones. While tasks are indeed apriori homogenous (or can be partitioned to be so) in some crowdsourcing settings, there are other applications where some tasks are clearly harder than others; also, agents may have task-specific abilities. Designing mechanisms with strong incentive properties for this setting is a very promising and important direction for further work. \\

{\bf Acknowledgements.} We thank David Evans, Patrick Hummel and David Stavens for helpful discussions and pointers to related literature, and anonymous referees for comments and suggestions that helped improve the presentation of the paper.
\bibliographystyle{abbrv}
\bibliography{peer}

\end{document}